\documentclass[12pt]{article}
\usepackage{graphicx}
\usepackage{amsmath}
\usepackage{amssymb}
\usepackage{amsthm}
\usepackage{amsfonts}
\usepackage{xcolor}
\usepackage{thm-restate}
\usepackage[T1]{fontenc}
\usepackage{hyperref}
\newtheorem{red}{Reduction Rule}{\bfseries}{\itshape}
\newtheorem{bred}{Branching Rule}{\bfseries}{\itshape}
\newcommand{\claimqed}{\hfill $\square$} 

\usepackage{fullpage}

\newtheorem{theorem}{Theorem}
\newtheorem{lemma}{Lemma}

\newtheorem{claim}{Claim}[section]

\newtheorem{definition}{Definition}[section]


\title{MaxMin Separation Problems: FPT Algorithms for $st$-Separator and Odd Cycle Transversal}

\author{Ajinkya Gaikwad\thanks{Indian Institute of Science Education and Research, Pune, India \href{mailto:ajinkya.gaikwad@students.iiserpune.ac.in}{ajinkya.gaikwad@students.iiserpune.ac.in}.}
\and
Hitendra Kumar\thanks{
Indian Institute of Science Education and Research, Pune, India \href{mailto:hitendra.kumar@students.iiserpune.ac.in}{hitendra.kumar@students.iiserpune.ac.in}.}
\and
Soumen Maity\thanks{
Indian Institute of Science Education and Research, Pune, India \href{mailto:soumen@iiserpune.ac.in}{soumen@iiserpune.ac.in}.}
\and Saket Saurabh\thanks{Institute of Mathematical Sciences, Chennai, India \href{mailto:saket@imsc.res.in}{saket@imsc.res.in}.}
\and Roohani Sharma\thanks{University of Bergen, Norway \href{mailto:r.sharma@uib.no}{r.sharma@uib.no}.}
}
\date{}

\begin{document}
\maketitle          
\begin{abstract}
In this paper, we study the parameterized complexity of the {\sc MaxMin} versions of two fundamental separation problems: {\sc Maximum Minimal $st$-Separator} and  {\sc Maximum Minimal Odd Cycle Transversal (OCT)}, both parameterized by the solution size.
In the {\sc Maximum Minimal $st$-Separator} problem, given a graph $G$, two distinct vertices $s$ and $t$ and a positive integer $k$, 
the goal is to determine whether there exists a minimal $st$-separator in $G$ of size at least $k$. 
Similarly, the {\sc Maximum Minimal OCT} problem seeks to determine if there exists a minimal set of vertices whose deletion results in a bipartite graph, and whose size is at least $k$.
We demonstrate that both problems are
fixed-parameter tractable parameterized by $k$.
Our FPT algorithm for {\sc Maximum Minimal $st$-Separator} answers the open question by Hanaka, Bodlaender, van der Zanden \& Ono [TCS 2019]. 

One unique   insight from this work is the following. We use the meta-result of Lokshtanov, Ramanujan, Saurabh \& Zehavi [ICALP 2018] that enables us to reduce our problems to highly unbreakable graphs.
This is interesting, as an explicit use of the recursive understanding and randomized contractions framework of Chitnis, Cygan, Hajiaghayi, Pilipczuk \& Pilipczuk [SICOMP 2016] to reduce to the highly unbreakable graphs setting (which is the result that Lokshtanov et al. tries to abstract out in their meta-theorem) does not seem obvious because certain ``extension'' variants of our problems are W[1]-hard.
\end{abstract}

\section{Introduction}\label{sec:intro}
In this work we study fixed-parameter tractability of two fundamental MaxMin separation problems: \textsc{Maximum Minimum $st$-Separator (MaxMin $st$-Sep)} and \textsc{Maximum Minimum Odd Cycle Transversal (MaxMin OCT)}.
In the \textsc{MaxMin $st$-Sep} problem, the input is an undirected graph $G$, two distinct vertices $s,t$ of $G$ and a positive integer $k$.
The goal is to determine if there exists a subset of vertices $Z$,
of size {\em at least} $k$,
such that $Z$ is a {\em minimal} $st$-separator in $G$. 
That is, the deletion of $Z$ disconnects $s$ and $t$ and the deletion of any proper subset of $Z$ results in a graph where $s$ and $t$ are connected.
Similarly, in the \textsc{MaxMin OCT} problem, given $G,k$, the goal is to determine if there 
exists a set of vertices $Z$, of size {\em at least} $k$, 
such that $Z$ intersects all odd length cycles in $G$
and $Z$ is minimal, that is no proper subset of $Z$ intersects all odd length cycles in $G$.

In contrast to the classical polynomial-time solvable \textsc{$st$-Separator} problem, where the goal is to find an $st$-separator of size {\em at most} $k$, 
the \textsc{MaxMin $st$-Sep} problem is NP-hard~\cite{HANAKA2019294}. 
The \textsc{MaxMin OCT} problem can also be shown to be NP-hard by giving a reduction from the {\sc MaxMin $st$-Sep} problem (see Section~\ref{MMOCT is NP-hard}).

In this work we show that both \textsc{MaxMin $st$-Sep} and \textsc{MaxMin OCT} are fixed-parameter tractable with respect to the parameter $k$. 
The first result, in fact,
resolves an open problem which was explicitly posed by Hanaka, Bodlaender, van der Zanden \& Ono~\cite{HANAKA2019294}.

{\sc MaxMin} versions of several classical vertex/edge deletion minimization problems have been studied in the literature. 
The original motivation behind studying such versions is that the size of the solution of the {\sc MaxMin} versions reflects on the worst-case guarantees of a greedy heuristic. 
In addition to this, the {\sc MaxMin} problems have received a lot of attention also because of their deep combinatorial structure which makes them stubborn even towards basic algorithmic ideas. 
As we will highlight later, neither greedy, nor an exhaustive-search strategy like branching, works for the {\sc MaxMin} versions of even the ``simplest'' problems without significant effort.
The {\sc MaxMin} versions are even harder to approximate. 
For example, the classic {\sc Vertex Cover} and {\sc Feedback Vertex Set} problems admit $2$-approximation algorithms~\cite{bafna19992}, which are tight under the Unique Games Conjecture~\cite{DBLP:journals/jcss/KhotR08}.
But the {\sc MaxMin Vertex Cover} admits a $n^{1/2}$-approximation, which is tight unless $P=NP$~\cite{BORIA201562,BONNET2018171}, and the {\sc MaxMin Feedback Vertex Set} admits a tight $n^{2/3}$-approximation~\cite{DUBLOIS202226}.

 We note here that the study of MaxMin versions of classical deletion problems is not the only proposed way of understanding worst-case heursitics guarantees (though in this work we only focus on such versions).
Several variations of different problems have been defined whose core is similar.
This includes problems like \textsc{$b$-Coloring}, \textsc{Grundy Coloring} etc, which are analogs for the classic \textsc{Chromatic Number} problem. 

\vspace{9pt}

\noindent\textbf{Parameterized Complexity of \textsc{MaxMin} problems in literature.}
 Given the extensive study of the {\sc Vertex Cover} problem in parameterized complexity, the parameterized complexity {\sc MaxMin Vertex Cover} has naturally garnered significant attention \cite{doi:10.1137/16M109017X,BORIA201562,BONNET2018171,10.1007/s00453-022-00979-z}. 
 More recently, {\sc MaxMin Feedback Vertex Set} problem has also been explored in \cite{DUBLOIS202226,lampis_et_al:LIPIcs.MFCS.2023.62}, where several faster FPT algorithms are proposed. 
 The {\sc MaxMin Upper Dominating Set} problem has been studied in \cite{DBLP:journals/corr/AbouEishaHLMRZ16,10.1007/s00453-022-01036-5,BAZGAN20182,DUBLOIS2022271},
 and its edge variant, the {\sc MaxMin Upper Edge Dominating Set}, is addressed in \cite{MONNOT202146,gaikwad2022parameterizedcomplexityupperedge}. 
 Additionally, {\sc MaxMin} and {\sc MinMax} formulations have been investigated for a range of other problems, including cut and separation problems \cite{DBLP:conf/icalp/Lampis21,DBLP:journals/corr/abs-2007-04513}, 
knapsack problems \cite{10.1007/978-3-642-40164-0_18,FURINI2017438}, matching problems \cite{10.1007/978-3-031-25211-2_29}, and coloring problems \cite{DBLP:journals/corr/abs-2008-07425}.
We elaborate more on the literature around {\sc MaxMin} versions of cut and separation
problems in the later paragraph.

We remark that for most problems mentioned in the paragraph above, 
showing that they are FPT parameterized by the solution size is not very difficult (though designing faster FPT algorithms could be much challenging and require deep problem-specific combinatorial insights).
The reason is that it is easy to bound the treewidth of the input graph: find a greedy packing of obstructions (edges for {\sc MaxMin Vertex Cover} and cycles for {\sc MaxMin Feedback Vertex Set}); if the packing size is at least $k$, then the instance is a yes-instance, otherwise there exists a vertex cover (resp. feedback vertex set) of the input graph, of size at most $2k$ (resp.~$\mathcal{O}(k \log k)$ from the Erd\"os-P\'osa theorem). In both case, the treewidth of the graph is bounded by a function of $k$. Since these problems are expressible in Monadic Second Order (MSO) Logic, from Courcelle's theorem a linear-time FPT in $k$ algorithm follows.

\vspace{9pt}

\noindent{\bf Parameterized Complexity of {\sc MaxMin} cut and separation problems in literature.}
The key challenge in the study of cut and separation problems, in contrast to the vertex/edge-deletion problems mentioned in the paragraph above, is that it is not always easy to bound the treewidth of the instances. 
Having said that the existing work on {\sc MaxMin} versions of cut and separation problems are still based on treewidth win-win approaches as explained below.

Hanaka et al.~\cite{HANAKA2019294} studied the parameterized complexity of the {\sc MaxMin Separator} problem. Here the input is only a connected graph $G$ and a positive integer $k$, and the goal is to find a minimal vertex set of size at least $k$ whose deletion disconnects the graph. This problem has an easy FPT algorithm parameterized by $k$ based on a win-win approach: if the input graph has large treewidth, then it has a large grid-minor which implies the existence of a large solution; otherwise the treewidth is bounded and since the problem is expressible in MSO, one can solve the problem in linear time on bounded treewidth graphs.
Note that this approach completely fails when we are looking for an $st$-separator (which is the problem we attack in the present work) for fixed vertices $s$ and $t$. In particular, a large grid minor does not necessarily imply a large $st$-separator.

In the same work Hanaka et al. designed an explicit FPT algorithm parameterized by treewidth for the {\sc MaxMin $st$-Sep} problem and left open the question of determining the parameterized complexity of {\sc MaxMin $st$-Sep} parameterized by the solution size $k$.

Common techniques like flow augmentation  \cite{10.1145/3641105} and treewidth reduction \cite{10.1145/2500119} are not suitable for solving the {\sc Maximum Minimal $st$-Separator} problem. This is because the size of the solution in this problem can be arbitrarily large, rather than being bounded by a function of $k$. When the solution size is unbounded, these methods fail to simplify the problem or give meaningful reductions.
We also considered the idea to reduce the problem to a scenario where the solution size is bounded by some function of $k$ using some win-win approach, and then use techniques like treewidth-reduction, unbreakable tree decompositions or even flow-augmentation.
But showing the first part seemed tricky and even if one shows that the solution size is bounded, it is still not straightforward to use these classical cut-based techniques as maintaining certificates of minimality of a partial solution is hard in general.

Later Duarte et al.~\cite{DBLP:journals/corr/abs-2007-04513} studied the edge-deletion variant of the {\sc MaxMin $st$-Sep} problem.
They termed it the {\sc Largest $st$-Bond} problem and showed, amongst other results, that it is FPT parameterized by the solution size $k$. At the core of this algorithm is another treewidth-based win-win approach, which seems hard to generalize to the vertex-deletion case.

\vspace{9pt}

\noindent{\bf Problem 1: MaxMin $st$-Sep.}
In this work, for the first time, we use the power of highly unbreakable instances to show fixed-parameter tractability of some {\sc MaxMin} separation problems.
For two positive integers $q,k$, a graph $G$ is called $(q,k)$-unbreakable if {\em no} vertex set of size {\em at most} $k$ can disconnect two (large) sets of size at least $q$ each. For the purposes of an informal discussion, we say a graph is unbreakable if it is $(q,k)$-unbreakable for some values of $q$ and $k$.

Chitnis et al.~\cite{doi:10.1137/15M1032077} developed a win-win approach based on this unbreakable structure of the graph. The approach has two parts: 
recursive understanding and randomized contractions. 
In the first part, a large enough part of the input graph is detected that is unbreakable and has small boundary to the rest of the graph. 
In the second part, a family of ``partial solutions'' are computed for this unbreakable part of the graph depending on how the solution for the whole graph interacts with its boundary. 
If the unbreakable part is large enough, then there exists an irrelevant edge/vertex that does not participate in the computed partial solutions, which helps in reducing the size of the graph.

Unfortunately, at the first glance it looks impossible to use this approach for solving any of the two problems {\sc MaxMin $st$-Sep} and {\sc MaxMin OCT}.
The problem is that in the above approach one needs to find partial solutions on unbreakable graphs for a more general ``extension-kind'' of problem.
In particular, the family of partial solutions should be such that if there exists a solution for the whole graph,
then one should be able to replace the part of this solution that intersects with the unbreakable part, with one of the computed partial solutions.
To do so, for example, it seems necessary to, in some implicit way at least, guess how an optimum solution of the whole graph intersects with the boundary of this unbreakable part and the partial solution should at least try to be ``compatible'' with this guess.
The bottleneck here is that given a vertex $v$, finding whether the graph $G$ has {\em any} minimal $st$-separator {\em containing $v$} is NP-hard (see Lemmas~\ref{irrelevant vertices} and \ref{induced path through a vertex}).
Thus, the problem of determining, given a subset of vertices $X$, whether $G$ has a minimal $st$-separator of size at least $k$, that contains $X$, is W[1]-hard (it is in fact para-NP-hard).

\begin{lemma}\label{irrelevant vertices}
Let $G$ be a graph containing vertices $s$ and $t$. A vertex $v \in V(G)$ is in some minimal
$st$-separator $Z$ if and only if there is an induced path between $s$ and $t$ containing $v$. 
\end{lemma}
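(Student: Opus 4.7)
The plan is to prove the two implications separately, implicitly assuming $v\neq s,t$ (by convention no $st$-separator contains the endpoints).

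For the forward direction, suppose $v$ lies in some minimal $st$-separator $Z$. By minimality of $Z$, the set $Z\setminus\{v\}$ does not separate $s$ from $t$, so in $G-Z$ the components $C_s\ni s$ and $C_t\ni t$ must be reconnected once $v$ is added back; in particular, $v$ has a neighbor in $C_s$ and a neighbor in $C_t$. I would then pick $u\in N(v)\cap C_s$ minimizing $d_{G[C_s]}(s,u)$ and let $P_s$ be a shortest $su$-path in $G[C_s]$. Any chord from $v$ into the interior of $P_s$ would give $v$ a neighbor in $C_s$ strictly closer to $s$ than $u$, contradicting the choice of $u$. Choosing $w\in N(v)\cap C_t$ and $P_t$ symmetrically, the concatenation $P_s,v,P_t$ is an $st$-path through $v$, and it is induced: $P_s$ and $P_t$ are shortest paths in induced subgraphs and hence chord-free, $v$ has no chord into the interior of either, and no edge of $G$ runs between $C_s$ and $C_t$ because they are distinct components of $G-Z$.

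For the backward direction, suppose $P$ is an induced $st$-path through $v$. Consider the family $\mathcal{F}$ of $st$-separators $Z$ satisfying $v\in Z$ and $Z\cap V(P)=\{v\}$. First, $\mathcal{F}\neq\emptyset$: take $Z_0=\{v\}\cup(V(G)\setminus V(P))$; since $P$ is induced, $G-Z_0$ equals $G[V(P)\setminus\{v\}]$, which is precisely the two $P$-subpaths lying before and after $v$, hence $s$ and $t$ are separated. Pick $Z$ to be inclusion-minimal in $\mathcal{F}$. I claim $Z$ is already a minimal $st$-separator of $G$. For $z\in Z$ with $z\neq v$, if $Z\setminus\{z\}$ were still an $st$-separator it would also lie in $\mathcal{F}$, contradicting inclusion-minimality. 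And $Z\setminus\{v\}$ cannot be an $st$-separator either, because by the defining property of $\mathcal{F}$ it is disjoint from $V(P)$, so $P$ itself survives as an $st$-path in $G-(Z\setminus\{v\})$.

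The single nontrivial point, and the place where the inducedness of $P$ is genuinely used, is the last sentence above: without $P$ being induced, the seed $Z_0$ could fail to be a separator (a chord of $P$ skipping $v$ would reconnect $s$ and $t$ in $G[V(P)\setminus\{v\}]$), and symmetrically $Z\setminus\{v\}$ could separate $s$ from $t$ despite being disjoint from $V(P)$, destroying the argument that $v$ is irremovable from $Z$. Once inducedness is invoked, both directions reduce to routine choices (a distance-minimizing neighbor on each side of $Z$ in the forward direction, and inclusion-minimality within $\mathcal{F}$ in the backward direction).
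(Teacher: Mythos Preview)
Your proof is correct, and both directions take a somewhat different route from the paper. For the forward direction, the paper argues by contradiction: it fixes an $st$-path $P$ in $G-(Z\setminus\{v\})$ through $v$ and asserts that, if no induced $st$-path through $v$ exists, then $P$ must have a chord $ab$ with $a$ on the $s$-side and $b$ on the $t$-side of $v$, yielding an $st$-path avoiding all of $Z$. As written this step is slightly loose, since a chord of $P$ could lie entirely on one side of $v$; one needs to take $P$ shortest or iterate the shortcutting. Your explicit construction via the closest neighbors of $v$ in the components $C_s$ and $C_t$ of $G-Z$ builds the induced path directly and avoids that wrinkle. For the backward direction, the paper invokes its certificate framework (Definition~\ref{def:certificate of minimality} and Lemma~\ref{greedy alg minimal st separator}): the two halves of the induced path on either side of $v$ serve as the sets $S,T$ certifying that $\{v\}$ extends to a minimal $st$-separator via the greedy procedure. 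Your inclusion-minimality argument over $\mathcal{F}$ reaches the same conclusion in a self-contained way; the paper's version has the advantage of reusing machinery that is needed later in the algorithm, while yours is more elementary and portable.
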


\begin{proof} For the first part, assume there exists a minimal $st$-separator $Z$ that contains $v$. By the minimality of $Z$, there must be a path $P$ between $s$ and $t$ in the graph $G - (Z \setminus \{v\})$, which passes through $v$. In other words, path $P$ is such that no vertex in $Z$ other than $v$ appears on it. However, since there is no induced path between $s$ and $t$ through $v$, two vertices, say $a$ and $b$, on $P$ must be adjacent, with $a$ lying between $s$ and $v$, and $b$ lying between $v$ and $t$. This creates a new path between $s$ and $t$ that does not include any vertex from $Z$, contradicting the assumption that $Z$ is an $st$-separator.
\par For the second part, if there exists an induced path between $s$ and $t$ passing through $v$, we can construct a minimal $st$-separator $Z$ that includes $v$. According to Definition \ref{def:certificate of minimality}, let $S$ be the set of all vertices on the induced path from $s$ to $a$, and $T$ the set of all vertices on the induced path from $b$ to $t$, where $a$ is a predecessor of $v$ and $b$ a successor of $v$. These sets, $S$ and $T$, serve as a \textit{certificate} for the $st$-separator minimality for ${v}$. By Lemma \ref{greedy alg minimal st separator}, we can then construct a minimal $st$-separator that contains $v$. 
\end{proof}

\begin{lemma}[\cite{HAAS2006360}\label{induced path through a vertex}]
Given a graph $G$ and any three arbitrary vertices $s,t$ and $v$, determining if there exist an induced path between $s$ and $t$ through $v$ is NP-hard.
\end{lemma}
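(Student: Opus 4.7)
The plan is to establish NP-hardness by reduction from \textsc{3-SAT}, in the spirit of Bienstock's classical construction for induced-path problems. Given a $3$-CNF formula $\phi$ with variables $x_1,\ldots,x_n$ and clauses $C_1,\ldots,C_m$, the goal is to build a graph $G_\phi$ with three designated vertices $s$, $t$, $v$ such that $G_\phi$ admits an induced $st$-path through $v$ if and only if $\phi$ is satisfiable.

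Concretely, the construction has two halves meeting at $v$. The first half, from $s$ to $v$, chains $n$ \emph{variable gadgets} in series; the gadget for $x_i$ is a pair of internally vertex-disjoint paths between an entry port and an exit port, a \emph{true}-branch passing through a literal-vertex $\ell_i^{+}$ and a \emph{false}-branch passing through a literal-vertex $\ell_i^{-}$. Any induced $sv$-path must traverse these gadgets in order and select exactly one branch inside each, thereby encoding a truth assignment to the $x_i$'s. The second half, from $v$ to $t$, chains $m$ \emph{clause-verifier gadgets}, one per clause, where each verifier is equipped with ``blocking'' edges to the literal-vertices of the first half, designed so that the verifier can be traversed chord-free only if at least one of its three literals is \emph{not} on the chosen $sv$-subpath, i.e., only if that clause evaluates to true under the encoded assignment. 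Composing the halves, an induced $st$-path through $v$ exists precisely when some assignment satisfies every clause.

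The forward direction (a satisfying assignment yields an induced $st$-path through $v$) is straightforward once the gadgets are fixed: follow the chosen branches in the first half up to $v$, then in each clause-verifier traverse the subpath corresponding to one satisfied literal, and verify by inspection that the concatenation has no chord. The main obstacle, and the real technical content, is the reverse direction, which amounts to a global chord analysis of $G_\phi$. One must show that (i) no induced $sv$-path can skip, re-enter, or shortcut a variable gadget, so every such path indeed encodes an assignment; (ii) the only edges between the two halves are the intended blocking edges, so the two halves can be analyzed almost independently; and (iii) these blocking edges forbid exactly those $vt$-subpaths corresponding to unsatisfied clauses. This chord bookkeeping, typically enforced by judicious padding and small induced-$P_4$-type substructures inside each gadget, is the heart of the argument; \cite{HAAS2006360} carries out the analysis in full, and the plan is simply to follow that blueprint.
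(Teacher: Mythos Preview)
The paper does not prove this lemma at all; it is stated with a citation to \cite{HAAS2006360} and used as a black box. So there is no ``paper's own proof'' to compare against.

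Your proposal is a high-level outline of a Bienstock-style reduction from \textsc{3-SAT}, which is indeed the standard route to this result. However, as written it is a plan rather than a proof: you never actually specify the gadgets (the internal structure of the variable branches, the clause-verifier subgraphs, the padding vertices, or the precise placement of blocking edges), and you explicitly defer the chord analysis---which you correctly identify as ``the real technical content''---back to \cite{HAAS2006360}. In effect your proposal reduces to ``cite the reference,'' which is exactly what the paper does, only with more surrounding prose. If the intent is to supply an independent proof, the gap is that none of the verification steps (i)--(iii) are carried out; if the intent is merely to point to the literature, a one-line citation suffices.
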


Therefore, a first glance suggests that computing ``partial solutions'' may be W[1]-hard in general. 
One can ask if the hardness holds even when the graph is unbreakable (which is our scenario).
It turns out yes, because a result by Lokshtanov et al.~\cite{Lokshtanov2018ReducingCM}, shows that the FPT algorithm for unbreakable graphs can be lifted to an FPT algorithm on general graphs for problems definable in Counting Monadic Second Order (CMSO) Logic. Since the extension version is also CMSO definable, there shouldn't be an FPT algorithm for the extension version even on unbreakable graphs.

Despite this issue, we show that the core lies in solving the problem on unbreakable graphs, and in fact the problem on unbreakable graphs is FPT using non-trivial insights.
In fact the result of Lokshtanov et al.~\cite{Lokshtanov2018ReducingCM} comes to rescue. In~\cite{Lokshtanov2018ReducingCM} Lokshtanov et al. show that, in order to show fixed-parameter tractability of CMSO definable problems parameterized by the solution size (say $k$), 
it is enough to design FPT algorithms for such problems when the input graph is $(q,k)$-unbreakable, for some large enough $q$ that depends only on $k$ and the problem. 
The highlight of this result, compared to explicitly doing the above-mentioned 
approach of recursive understanding and randomized contractions is that,
this results says that it is enough to solve the {\em same} problem on unbreakable graphs.
This allows us to skip taking the provably hard route of dealing with extension-kind problems.

After overcoming the above issues, we can safely say that the core lies in designing an FPT algorithm for the unbreakable case, which itself is far from obvious. We give a technical overview of this phase in Section~\ref{sec:tech-overview}.

\vspace{9pt}

\noindent{\bf Problem 2: MaxMin OCT.}
The challenges continue for the {\sc MaxMin OCT} problem.
In parameterized complexity, the first FPT algorithm for the classical {\sc Odd Cycle Transversal (OCT)} problem parameterized by the solution size, introduced the technique of Iterative Compression~\cite{DBLP:journals/orl/ReedSV04}. Using this, it was shown that  
{\sc OCT} reduces to solving at most $3^k \cdot n$ instances of the polynomial-time {\sc $st$-Separator} problem.
Most algorithms for different variants of the {\sc OCT} problem, like {\sc Independent OCT}, use the same framework and reduce to essentially solving the variant of {\sc $st$-Separator}, like {\sc Independent $st$-Separator}~\cite{DBLP:journals/talg/LokshtanovPSSZ20}.

For the {\sc MaxMin} variant, unfortunately this is not the case. 
In fact our FPT algorithm for {\sc MaxMin OCT} is independent of our FPT algorithm for {\sc MaxMin $st$-Sep}.
The reason is again that finding a minimal odd cycle transversal set (oct) that extends a given vertex is NP-hard (see Observation~\ref{delete unnecessary vertex} and Lemma~\ref{existence of odd cycle}). 

\begin{restatable}{observation}{obsirrelevant}\label{delete unnecessary vertex}
A vertex $v \in V(G)$ does not participate in any induced odd cycle of $G$ if and only if $v$ is not in any minimal OCT.
\end{restatable}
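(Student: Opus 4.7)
The plan is to prove both implications of the observation by a direct structural argument, in the same spirit as Lemma~\ref{irrelevant vertices}.

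For the forward direction, I would assume $v$ lies in some minimal OCT $Z$. Minimality forces $Z \setminus \{v\}$ to miss some odd cycle, so there is an odd cycle in $G - (Z \setminus \{v\})$; let $C$ be a shortest such odd cycle. First I would argue that $C$ must pass through $v$: otherwise $C$ lives entirely in $G - Z$, contradicting the fact that $Z$ is an OCT. Next, $C$ is chordless in $G - (Z \setminus \{v\})$ by the standard parity argument: any chord splits $C$ into two cycles $C_1, C_2$ with $|C_1|+|C_2|=|C|+2$ odd, so one of them is a strictly shorter odd cycle in the same subgraph, contradicting the choice of $C$. Finally, to upgrade this to "induced in $G$ itself", I would observe that $V(C) \cap (Z \setminus \{v\}) = \emptyset$, so any chord of $C$ in $G$ joins two vertices outside $Z \setminus \{v\}$ and is therefore already an edge of $G - (Z \setminus \{v\})$. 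This yields the required induced odd cycle through $v$.

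For the reverse direction, let $C$ be an induced odd cycle through $v$. The key construction is $Z_0 := (V(G) \setminus V(C)) \cup \{v\}$. Since $C$ is induced and odd, $G[V(C) \setminus \{v\}]$ is simply a path, hence bipartite, so $Z_0$ is an OCT that contains $v$. I would then iteratively shrink $Z_0$ while protecting $v$: as long as some $u \in Z \setminus \{v\}$ can be removed without destroying the OCT property, delete it, and let $Z$ be the resulting set. Every removed vertex lies outside $V(C)$, because $Z \cap V(C)$ starts equal to $\{v\}$ and $v$ is never removed, so the invariant $Z \cap V(C) = \{v\}$ is preserved throughout. Consequently $V(C) \setminus \{v\}$ remains disjoint from $Z$, and $G - (Z \setminus \{v\})$ still contains the odd cycle $C$; so $Z \setminus \{v\}$ is not an OCT. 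Combined with the termination condition, which ensures $Z \setminus \{u\}$ is not an OCT for every $u \in Z \setminus \{v\}$, this proves that $Z$ is a minimal OCT containing $v$.

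The only subtle point, which I expect to be the main (though mild) obstacle, is the transition from "induced in $G - (Z \setminus \{v\})$" to "induced in $G$" in the forward direction. This is resolved cleanly by the observation that $V(C)$ avoids $Z \setminus \{v\}$, so no chord of $C$ can be "hidden" by the vertex deletion; apart from this, both directions reduce to choosing the right shortest-cycle witness and the right delete-and-keep-$v$ minimization procedure.
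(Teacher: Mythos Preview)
Your proof is correct; both directions are sound, and the shortest-odd-cycle argument together with the observation that $V(C)\cap (Z\setminus\{v\})=\emptyset$ cleanly handles the only delicate point.

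Compared to the paper: the paper's written proof actually targets a slightly different claim, namely that such a $v$ is \emph{irrelevant} (i.e., $(G,k)$ and $(G-v,k)$ are equivalent instances), rather than the biconditional as stated. The forward implication you prove is present there only implicitly, in the one-line assertion that for each $z$ in a minimal oct $Z$ the odd cycle in $G-(Z\setminus\{z\})$ ``implies the existence of an induced odd cycle $C_z$ containing $z$''; you supply exactly the missing justification via the shortest-cycle/parity/chord argument. For the reverse implication, your greedy shrink of $Z_0=(V(G)\setminus V(C))\cup\{v\}$ is essentially an explicit instantiation of the paper's certificate machinery (Lemma~\ref{greedy alg}): the induced bipartite path $G[V(C)\setminus\{v\}]$ is a certificate for the oct-minimality of $\{v\}$, and the paper's greedy algorithm then produces a minimal oct containing $v$. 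So your approach is the same in spirit but more direct and more carefully argued for the observation itself; the paper's version trades that detail for simultaneously establishing the irrelevance conclusion it actually needs.
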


\begin{proof}For the forward direction,
let $Z$ be a minimal oct of $G$ of size at least $k$. 
Then
$G - Z$ is bipartite.
Additionally, for any $z \in Z$, 
the graph $G - (Z \setminus \{z\})$ contains an odd cycle through $z$, implying the existence of an induced odd cycle $C_z$ containing $z$.
As the cycle is an induced odd cycle, we know that the vertex $v$ does not lie on this cycle.
We will show that $Z$ is also a minimal oct in $G-v$.
Clearly, $G-(Z \cup \{v\})$ is a bipartite graph.
Therefore, $Z$ is an oct of $G-v$.
For any vertex $z\in Z$, 
there is an induced odd cycle $C_{z}$ in $G-(Z \cup \{v\})$.
Therefore, $Z$ is also a minimal odd cycle traversal of $G-v$.

For the other direction, let $Z$ be a minimal oct of $G - v$ with $|Z| \geq k$. 
Then, $G - (Z \cup \{v\})$ is bipartite.
For each $z \in Z$, the graph $G - ((Z \setminus \{z\}) \cup \{v\})$ contains an induced odd cycle $C_z$.
Since $v$ is not part of any induced odd cycle, $G - Z$ is bipartite, and $Z$ remains a minimal odd cycle transversal in $G$. 
\end{proof}

\begin{lemma}\label{existence of odd cycle}
  Given a graph $G$ and a vertex $v\in V(G)$, determining whether there is an induced odd cycle containing a given vertex is NP-complete.  
\end{lemma}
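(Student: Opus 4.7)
The problem lies in NP because a purported induced odd cycle through $v$ is a polynomial-size certificate that can be checked in polynomial time (verify odd length, pairwise non-adjacency outside cycle edges, and that $v$ appears on it). For NP-hardness, I would reduce from the problem of deciding, given $(G, s, t, v)$, whether $G$ contains an induced path between $s$ and $t$ through $v$, which is NP-hard by Lemma~\ref{induced path through a vertex}.

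I would first preprocess the induced-path instance $(G, s, t, v)$ so that $v \notin \{s, t\}$, $st \notin E(G)$, and $G$ is bipartite. The first two assumptions are immediate (otherwise the instance is trivial). Bipartiteness can be arranged by a suitable adaptation of the reduction underlying Lemma~\ref{induced path through a vertex} so that its output is bipartite. This bipartiteness is crucial because it forbids any induced odd cycle from lying entirely inside $G$, and this is precisely what prevents spurious false positives in the target instance.

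Next, I would construct $G^*$ from $G$ by attaching two internally disjoint paths from $s$ to $t$ through fresh vertices: add new vertices $a, b, c$ with edges $sa, at, sb, bc, ct$, creating one $s$-to-$t$ path of length two (via $a$) and one of length three (via $b, c$). The central claim is that $G^*$ contains an induced odd cycle through $v$ if and only if $G$ contains an induced path between $s$ and $t$ through $v$. For the forward direction, given an induced path $P$ through $v$ in $G$, close $P$ using whichever attachment has parity opposite to $|P|$: the length-two attachment when $|P|$ is odd, the length-three one when $|P|$ is even. The resulting cycle has odd length and is induced because $a, b, c$ have no neighbours in $V(G) \setminus \{s, t\}$ and no adjacencies among themselves other than the edge $bc$.

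For the reverse direction, let $C$ be an induced odd cycle through $v$ in $G^*$. Since $G$ is bipartite, $C$ cannot lie inside $G$, so $C$ uses at least one of $a, b, c$. Each of $a, b, c$ has degree two in $G^*$, which forces $C$ to traverse exactly one of the two attached paths in full; the only remaining possibility is that $C$ coincides with the $5$-cycle $s$-$a$-$t$-$c$-$b$-$s$, but this does not contain $v$ since $v \notin \{s, t, a, b, c\}$. Excising the attachment from $C$ then yields an induced path in $G$ from $s$ to $t$ through $v$, as desired. The principal obstacle is ruling out induced odd cycles through $v$ that sit entirely inside $G$ (which would be preserved in $G^*$ and break the equivalence), and the bipartiteness preprocessing is exactly what handles this.
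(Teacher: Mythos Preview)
Your construction and case analysis are sound \emph{given} that $G$ is bipartite, but the bipartiteness preprocessing step is a genuine gap. You assert that the instance of the induced-$st$-path-through-$v$ problem can be taken bipartite ``by a suitable adaptation of the reduction underlying Lemma~\ref{induced path through a vertex}'', yet you neither carry out such an adaptation nor cite a result establishing NP-hardness of that problem on bipartite graphs. The assumption is load-bearing: without it, an induced odd cycle through $v$ in $G^*$ could lie entirely inside $G$, and your reverse direction collapses. A natural repair---subdividing every edge of $G$ to force bipartiteness---does not preserve the problem: an induced $s$--$t$ path in the subdivision may contract to a non-induced walk in $G$, because two non-consecutive original vertices on the path can be adjacent in $G$ while in the subdivision that edge is replaced by a length-two path that supplies no chord. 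For a concrete instance, take $V(G)=\{s,v,w,t\}$ with edges $sv,vw,wt,sw$: there is no induced $s$--$t$ path through $v$ in $G$ (the only candidate $s\,v\,w\,t$ has the chord $sw$), but $s\,x_{sv}\,v\,x_{vw}\,w\,x_{wt}\,t$ is induced in the subdivision.

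The paper avoids this difficulty by reducing from a different source problem: deciding whether $G$ contains an induced path of \emph{odd length} between two given vertices $a$ and $b$, which is NP-complete by Bienstock~\cite{BIENSTOCK199185}. The construction then adds a single fresh vertex $x$ adjacent only to $a$ and $b$; an induced odd cycle through $x$ in the new graph corresponds exactly to an induced odd $a$--$b$ path in $G$, with no bipartiteness hypothesis and no parity gadget required. The parity constraint already present in the source problem does the work your bipartiteness assumption was meant to do.
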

\begin{proof} We know that given two vertices $a$ and $b$, determining if a graph contains an induced path of odd length between $a$ and $b$ is NP-complete \cite{BIENSTOCK199185}.
Construct a graph $G'$ by adding a new vertex $x$ to $G$ and making it adjacent to vertices $a$ and $b$. It is clear that there exists an induced odd cycle through $x$ in $G'$ 
if and only if there is an induced path of odd length between $a$ and $b$ in $G$. This finishes the proof. 
\end{proof}

At the core of the Iterative Compression based approach for {\sc OCT},
a subset of vertices $X$ is guessed to be in the solution,
and after deleting $X$, the problem reduces to finding an $st$-separator, for some $s,t$.
In particular, the final solution is this set $X$ union a minimum $st$-separator.
For the {\sc MaxMin} case,
this amounts to finding a minimal $st$-separator that together with $X$ forms a minimal oct. Since the extension version of both the {\sc MaxMin $st$-Separator} and {\sc MaxMin OCT} are para-NP-hard, this leaves little hope to use the same approach for {\sc MaxMin} versions.

Having eliminated this approach, we again go back to the approach via unbreakable graphs. This time again the core lies in designing an FPT algorithm when the graph is highly unbreakable and this algorithm require lot more insights than that of the {\sc MaxMin $st$-Sep}. We give a technical overview of this stage in Section~\ref{sec:tech-overview}.

\subsection{Our results and technical overview}\label{sec:tech-overview}

Below we state two main theorems of this work.

\begin{restatable}{theorem}{thmstsep}\label{Mainthm}
  {\sc Maximum Minimal $st$-Separator} parameterized by $k$ is FPT.
\end{restatable}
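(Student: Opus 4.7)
The plan is the two-step paradigm laid out in the introduction. In the first step I would cast ``$G$ has a minimal $st$-separator of size at least $k$'' as an MSO sentence: writing $\mathrm{Sep}(Z)$ for ``$s$ and $t$ lie in distinct connected components of $G-Z$,'' the required sentence is
\[
\exists Z\;\bigl(|Z|\ge k\,\wedge\,\mathrm{Sep}(Z)\,\wedge\,\forall v\in Z\;\neg\mathrm{Sep}(Z\setminus\{v\})\bigr),
\]
which is a plain MSO formula in the language of graphs with two distinguished constants $s,t$. Invoking the Lokshtanov--Ramanujan--Saurabh--Zehavi meta-theorem on this fixed formula then reduces the problem, in FPT time, to inputs that are $(q,k)$-unbreakable for some $q=q(k)$ depending only on $k$ and the sentence above. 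The whole weight of the proof then falls on the unbreakable case.

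On a $(q,k)$-unbreakable graph I would first compute the minimum $st$-cut value $\lambda$. If $\lambda\ge k$, then any minimum $st$-cut is automatically a minimal $st$-separator of size at least $k$, so we answer YES. Otherwise $\lambda<k$, and $(q,k)$-unbreakability applied to a minimum $st$-cut $S^{*}$ of size $\lambda<k$ forces one side of $G-S^{*}$, say the $s$-side $R_{s}^{*}$, to have fewer than $q$ vertices. This yields a bounded-size ``anchor'' $A=\{s\}\cup R_{s}^{*}\cup S^{*}$ of size at most $q+k$ that contains $s$ and touches the remainder of $G$ only through $S^{*}$. The strategy is then to branch over the trace of a hypothetical solution $Z$ on $A$, namely the set $Z\cap A$ together with a three-way labelling of $A\setminus Z$ into $s$-side and $t$-side (at most $3^{|A|}$ possibilities), and to reduce each branch to a search inside the residual graph obtained by contracting the guessed $s$-side and deleting the guessed solution vertices. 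Lemma~\ref{irrelevant vertices} is the key combinatorial tool, because the minimality of any chosen vertex $v\in Z$ reduces to exhibiting an induced $st$-path through $v$, and once the anchor is fixed this certification becomes a routing condition we can phrase locally around the anchor.

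The main obstacle is keeping this recursion faithful to global minimality. The excerpt already highlights that extension-style variants of {\sc MaxMin $st$-Sep} are para-NP-hard, so we cannot hope for a black-box oracle that extends an arbitrary committed subset of vertices to a minimal $st$-separator. The real work is to argue that the partial solutions generated by anchor-branching are structured enough to rule this obstruction out: every committed vertex arrives bundled with a certifying induced path routed through the current anchor, and every subsequent minimum $st$-cut computed in the residual graph either already has size at least $k-|Z\cap A|$ (so we terminate positively) or exposes another small side that can be absorbed into the anchor. Bounding the number of peeling rounds by a function of $k$, while globally maintaining an induced-path certificate for every vertex in the growing partial separator and ensuring each step runs in FPT time, is in my view where the heaviest combinatorial argument of the proof will sit.
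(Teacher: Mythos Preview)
Your first step is fine and matches the paper: the problem is MSO-expressible, and the Lokshtanov--Ramanujan--Saurabh--Zehavi meta-theorem reduces it to $(q,k)$-unbreakable instances.

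Your treatment of the unbreakable case, however, has a genuine gap, and the paper's argument is quite different from what you sketch. Your anchor-and-peel scheme commits vertices of a hypothetical $Z$ one small side at a time and then seeks to extend; you correctly flag that bounding the number of peeling rounds is the crux, but you give no mechanism for it. There is none apparent: after contracting a guessed $s$-side and deleting guessed solution vertices, the residual minimum $st$-cut can again be below $k$ with a fresh small side, and nothing in your invariant forces progress measured in $k$ or $q$. Worse, the ``certifying induced path through the current anchor'' idea does not obviously compose: a certificate for minimality of a set $V'$ (in the sense of the paper's Definition~\ref{def:certificate of minimality}) requires \emph{two} connected sets $S\ni s$ and $T\ni t$ with $E(S,T)=\emptyset$ and every $v\in V'$ adjacent to both, whereas your anchor lives entirely on the $s$-side of a small cut and supplies no connected $T$-side witness. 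Without that two-sided structure, you are effectively back to the extension problem you yourself note is para-NP-hard.

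The paper sidesteps extension entirely by growing \emph{both} sides symmetrically. It maintains connected sets $S\ni s$ and $T\ni t$ with $S\cap T=\emptyset$ and $E(S,T)=\emptyset$; a reduction rule keeps $N(S)$ and $N(T)$ minimal $ST$-separators, so if either has size $\ge k$ we are done. Otherwise $|N(S)|,|N(T)|<k$, and since every minimal $ST$-separator contains $N(S)\cap N(T)$, any solution of size $\ge k$ must miss some vertex of $N(S)\setminus N(T)$ (resp.\ $N(T)\setminus N(S)$); branch on that vertex and add it to $S$ (resp.\ $T$). Once $|S|,|T|\ge q$, $(q,k)$-unbreakability forces \emph{every} $ST$-separator disjoint from $S\cup T$ to have size $\ge k$, so any greedily computed minimal $ST$-separator is a valid output---no extension of a partial guess is ever needed. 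The branching has width at most $k-1$ and depth at most $2q$, giving $(k-1)^{2q}\cdot n^{\mathcal{O}(1)}$. The point you are missing is that making the $t$-side large \emph{as well} is what turns unbreakability into an unconditional yes-certificate and eliminates the extension obstacle.
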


\begin{restatable}{theorem}{thmoct}\label{thm:oct}
  {\sc Maximum Minimal OCT} parameterized by $k$ is FPT.
\end{restatable}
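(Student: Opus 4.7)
The plan is to invoke the meta-theorem of Lokshtanov, Ramanujan, Saurabh and Zehavi, which reduces an FPT question for any CMSO-expressible graph property parameterized by the solution size to the special case of $(q,k)$-unbreakable graphs, for $q = q(k)$ sufficiently large. The property ``$G$ admits a minimal OCT of size at least $k$'' is CMSO-expressible: the existence of a set $Z$ of size at least $k$, a proper $2$-coloring of $G - Z$, and, for every $z \in Z$, an odd closed walk in $G - (Z \setminus \{z\})$ through $z$ (using the parity predicate of CMSO) together encode the property. Hence it suffices to design an FPT algorithm on $(q,k)$-unbreakable graphs.

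Fix such a graph $G$. The first step is to compute a minimum OCT $Z_0$ by iterative compression in time $3^k n^{O(1)}$. If $|Z_0| \geq k$ we return YES, since $Z_0$ is itself a minimal OCT. Otherwise $|Z_0| < k$, so $G - Z_0$ is bipartite with a fixed $2$-coloring $(A_1, A_2)$, i.e.\ $G$ is ``almost bipartite''. For the target minimal OCT $Z$ of size at least $k$, $(q,k)$-unbreakability forces $G - Z$ to have a unique large bipartite component $C$ of size at least $n - q - k$, while the remnant $R := V(G) \setminus (Z \cup V(C))$ has size at most $q(k)$, a function of $k$ alone.

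We then branch on $Z \cap Z_0$ in at most $2^k$ ways and on a bounded-size signature describing how $Z$ interacts with $R$ in $g(k)$ ways. For each guess it remains to identify which vertices of $V(G) \setminus (Z_0 \cup R)$ lie in $Z$; here the ``rigidity'' of the two $2$-colorings comes into play: the $2$-coloring of $G - Z$ must, on each connected component of $G[V(C) \setminus Z_0]$, coincide with $(A_1, A_2)$ up to a global swap, which drastically restricts the admissible $Z$. This turns the search into a constraint problem with only boundedly many degrees of freedom.

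The main obstacle, which we expect to dominate the proof, is certifying minimality inside the bipartite core $C$: for every $z \in Z$ we need two neighbors $a_z, b_z$ of $z$ in $V(G) \setminus Z$ together with an induced odd $a_z b_z$-path in $G - Z$, and different certificates interfere because being ``induced'' is a non-local condition. We plan to handle this via an irrelevant-vertex argument that exploits the high connectivity of $C$: within a large subgraph of an unbreakable graph, induced odd paths between prescribed endpoints of opposite bipartition classes can be rerouted far from the boundedly many other certificates, after first reducing the instance so that all routing data lives in a gadget of size $f(k)$. Weaving this rerouting together with the guesses on $Z_0 \cup R$ should yield the claimed FPT algorithm, in a spirit parallel to but considerably more delicate than the analogous step in the proof of Theorem~\ref{Mainthm}, since one must simultaneously control parity, inducedness, and minimality certificates for every element of the solution.
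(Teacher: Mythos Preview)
There is a genuine gap. Your structural step on unbreakable graphs asserts that for the sought minimal oct $Z$, the graph $G-Z$ has a unique large component $C$ with $|V(G)\setminus(Z\cup V(C))|\le q(k)$. But $(q,k)$-unbreakability only constrains separations of order at most $k$; it says nothing when $|Z|$ is large, and in a \textsc{MaxMin} problem the target $Z$ has size \emph{at least} $k$ and may be arbitrarily large. Your bounds on $|C|$ and $|R|$ would in fact force $|Z|\le q+k$, which you have no way to justify. Without such a bound the rest of the plan---branching on $Z\cap Z_0$, guessing a bounded signature for $R$, and exploiting rigidity of $2$-colourings on $C$---does not get off the ground, because the unknown part of $Z$ inside the bipartite core is not of bounded size.

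The second issue is the minimality certification, which you correctly flag as the hard part but whose proposed resolution is too vague to be a plan. As the paper emphasises (Observation~\ref{delete unnecessary vertex} and Lemma~\ref{existence of odd cycle}), deciding whether a fixed vertex lies on some induced odd cycle---equivalently, whether it can belong to \emph{any} minimal oct---is already NP-hard; so ``rerouting induced odd paths'' to certify minimality for each $z\in Z$ is exactly the obstacle that has no obvious FPT workaround, and unbreakability does not help because the extension problem is CMSO-definable too. The paper therefore avoids reasoning about a hypothetical $Z$ altogether: on $(q,2k)$-unbreakable graphs it proves two \emph{sufficient} conditions for \textsc{Yes}---a long induced odd cycle (Lemma~\ref{long induced odd cycle}) or a large family of short induced odd cycles via the Sunflower Lemma (Lemma~\ref{lot of small odd cycles})---and combines them in an FPT procedure (Lemma~\ref{lem:sumup}, Theorem~\ref{main vertex OCT}) that for each vertex either certifies \textsc{Yes}, finds an irrelevant vertex, or bounds $|V(G)|$ by a function of $q,k$, after which brute force finishes. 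This is a fundamentally different strategy from the one you outline.
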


As discussed earlier, to prove both our theorems we first show that both {\sc MaxMin $st$-Sep} and {\sc MaxMin OCT} are CMSO definable. We then use Proposition~\ref{cmso}, to reduce to solving the problem on unbreakable graphs.

\begin{restatable}[Theorem 1, \cite{Lokshtanov2018ReducingCM}]{proposition}{prop}\label{cmso}
 Let $\psi$ be a CMSO formula. For all $c \in \mathbb{N}$, there exists $s \in \mathbb{N}$ such that if there
exists an algorithm that solves CMSO[$\psi$] 
on $(s, c)$-unbreakable structures in time $\mathcal{O}(n^{d})$ for some $d > 4$, then there exists an algorithm that solves CMSO[$\psi$] on general structures in time $\mathcal{O}(n^{d})$.
\end{restatable}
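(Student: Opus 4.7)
The plan is to use the recursive understanding paradigm combined with a Feferman--Vaught style composition theorem for CMSO. Let $q$ denote the quantifier rank of $\psi$, and choose $s = s(q, c)$ sufficiently large, to be fixed during the argument. Given an arbitrary input structure $G$ on $n$ vertices, I first attempt to find a vertex separator $S \subseteq V(G)$ with $|S| \leq c$ such that $V(G) \setminus S$ partitions into two parts $A, B$ each of size at least $s$. If no such separator exists, then $G$ is $(s, c)$-unbreakable by definition, and I invoke the assumed algorithm to decide $\psi$ on $G$ in time $\mathcal{O}(n^{d})$.

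If such a separator $S$ is found, I reduce the size of $G$ using the following key fact: the rank-$q$ CMSO-type of the boundaried structure $(G, S)$ is determined by the rank-$q$ CMSO-types of the two boundaried substructures $(G[A \cup S], S)$ and $(G[B \cup S], S)$. Since the number of distinct rank-$q$ CMSO-types over a boundary of size at most $c$ is a constant $\tau = \tau(q, c)$, one can precompute a catalogue of minimum-size representatives, one per type. The algorithm then recursively computes the rank-$q$ type of the smaller side by reducing this to polynomially many CMSO-satisfiability queries on appropriately constructed boundaried structures, replaces that side with its constant-size representative, and iterates. Each iteration shrinks the instance by at least $s - c$ vertices, so after $\mathcal{O}(n/s)$ rounds the instance becomes $(s, c)$-unbreakable and the base case applies.

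The two main obstacles are the composition theorem and the time accounting. Proving the composition theorem for CMSO with a boundary requires an Ehrenfeucht--Fra\"iss\'e-style argument extended to modular counting quantifiers; this is standard but non-trivial and precisely dictates how large $s$ must be chosen as a function of $q$ and $c$. For the running time, each iteration must find a good separator in time roughly $n^{\mathcal{O}(c)}$, invoke the assumed unbreakable algorithm on substructures of size at most $n/2$ to compute types, and perform the replacement, all within a budget compatible with the global $\mathcal{O}(n^{d})$ bound. The hypothesis $d > 4$ enters precisely here to absorb the separator-enumeration overhead together with the cost of maintaining types across the recursion; getting this arithmetic of exponents to work out cleanly, without polylogarithmic blowups sneaking in, is the genuinely delicate technical heart of the argument.
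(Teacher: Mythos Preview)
The paper does not prove this proposition at all: it is quoted verbatim as Theorem~1 of \cite{Lokshtanov2018ReducingCM} and used purely as a black box throughout Sections~\ref{st separator section} and~\ref{minimal OCT section}. There is therefore no ``paper's own proof'' to compare your attempt against.

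That said, your sketch has a real structural gap if taken as a standalone argument. You write that each iteration ``shrinks the instance by at least $s-c$ vertices'' and that after $\mathcal{O}(n/s)$ rounds the instance becomes unbreakable; but a single $(s,c)$-separation only guarantees that \emph{both} sides have at least $s$ vertices, not that the smaller side has at most some constant size. The smaller side can still have $\Theta(n)$ vertices and be itself highly breakable, so you cannot simply ``invoke the assumed unbreakable algorithm on substructures of size at most $n/2$ to compute types'': that algorithm is only promised to work on unbreakable inputs. What is actually needed is a genuine recursion (not an iteration) in which you recurse into the smaller side, bottoming out only when that side is unbreakable, compute its boundaried CMSO type there, and then replace it by a constant-size representative before continuing on the larger side. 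Getting the running time to stay $\mathcal{O}(n^{d})$ through this recursion tree---rather than through a flat loop---is exactly where the $d>4$ hypothesis and the careful analysis in \cite{Lokshtanov2018ReducingCM} enter, and your sketch does not yet capture this.
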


In particular, to prove Theorems~\ref{Mainthm} and~\ref{thm:oct}, 
it is enough to prove Theorems~\ref{thm:st-unbreak} and~\ref{thm:oct-unbreak}.

\begin{restatable}{theorem}{thmstsepunbreak}\label{thm:st-unbreak}
For positive integers $q,k \geq 1$,
{\sc Maximum Minimal $st$-Separator} on $(q,k)$-unbreakable graphs on $n$ vertices can be solved in time  $(k-1)^{2q} \cdot  n^{\mathcal{O}(1)}$.
\end{restatable}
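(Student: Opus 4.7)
The plan is to exploit $(q,k)$-unbreakability to constrain the small-side structure of any candidate minimal $st$-separator of size at least $k$, and then enumerate the relevant configurations within a budget of $(k-1)^{2q}$. First, I compute the minimum $st$-cut $\lambda$ in $G$ via standard max-flow in polynomial time. If $\lambda \geq k$, any minimum $st$-separator is automatically minimal (being of minimum size) and has size $\lambda \geq k$, so return yes. Henceforth assume $\lambda \leq k-1$, and fix a family $P_1, \dots, P_\lambda$ of $\lambda$ internally vertex-disjoint $s$-$t$ paths via Menger's theorem.

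Applying $(q,k)$-unbreakability to a minimum $st$-separator $S$ (with $|S|=\lambda \leq k$), the graph $G - S$ has at most one component of size greater than $q$; since $s$ and $t$ lie in different components, at least one of the $s$-side and $t$-side of $S$ has size at most $q$. By branching (a factor of $2$), I may assume the $s$-side of $S$ has size at most $q$. The key structural step is the following claim: any minimal $st$-separator $Z$ in $G$ with $|Z| \geq k$ can be assumed (possibly after replacing $Z$ by another minimal $st$-separator of size at least $k$) to have both its $s$-side $C_s$ and $t$-side $C_t$ in $G - Z$ of size at most $q$. I would establish this by a ``pushing'' argument that iteratively reshapes $Z$ while preserving minimality via the induced $s$-$t$-path characterization of Lemma~\ref{irrelevant vertices}; at each push step an intermediate $st$-separator of size at most $k-1$ appears, so $(q,k)$-unbreakability applies and forces a small side, and these steps compose to yield the desired bound.

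Given the structural claim, any valid pair $(C_s, C_t)$ contains at most $2q$ vertices in total. I would encode each such vertex $v$ by a label in $\{1, \dots, \lambda\} \subseteq \{1, \dots, k-1\}$ indicating the Menger path $P_i$ along which a fixed induced $s$-$t$ path witnessing $v$'s role travels (the existence of such a witness is guaranteed by Lemma~\ref{irrelevant vertices} applied to a chosen separator-neighbor of $v$). Enumerating all labelings yields $(k-1)^{2q}$ configurations. For each configuration, I reconstruct candidates for $C_s, C_t$ in polynomial time by chasing the labels through $G$, compute $Z := N(C_s) \cap N(C_t)$, and verify in polynomial time that $Z$ is a minimal $st$-separator of size at least $k$ via reachability plus the minimality completion implicit in Lemma~\ref{irrelevant vertices}. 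The main obstacle is the structural claim: $(q,k)$-unbreakability directly constrains only separators of size at most $k$, whereas our target $Z$ has size at least $k$, so obtaining the small-side property for $Z$ requires a careful sub-separator/pushing argument compatible with the minimality of $Z$ and the full-component identities $Z = N(C_s) = N(C_t)$.
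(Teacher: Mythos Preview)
Your proposal has a genuine gap at the structural claim, and the claim is in fact false. Consider the following $(3,k)$-unbreakable graph for any $k\geq 2$: take two cliques $B,B'$ each of size $N>k$, a set $Z$ of $m>k$ vertices with complete bipartite joins $B$--$Z$ and $Z$--$B'$, a vertex $a$ adjacent to all of $B$ and to $s$, and a vertex $b$ adjacent to all of $B'$ and to $t$. Every separation of order at most $k$ isolates at most $\{s\}$ and/or $\{t\}$, so the graph is $(3,k)$-unbreakable. The induced $st$-paths are exactly $s\,a\,v\,z\,w\,b\,t$ with $v\in B$, $z\in Z$, $w\in B'$, so the minimal $st$-separators are precisely $\{a\},\{b\},B,Z,B'$. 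The ones of size at least $k$ are $B,Z,B'$, and each has at least one side of size $N+2>3=q$. Thus no minimal $st$-separator of size at least $k$ has both sides of size at most $q$, directly refuting your ``pushing'' claim. The issue is exactly the one you flagged yourself: $(q,k)$-unbreakability says nothing about separations of order $\geq k$, and there is no mechanism that converts a large minimal separator into one with two small sides while preserving size $\geq k$.

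Even setting the claim aside, the enumeration step does not type-check. You assign each of the at most $2q$ vertices of $C_s\cup C_t$ a label in $\{1,\dots,k-1\}$, but a labeling does not identify \emph{which} $2q$ vertices these are; a priori that is an $n^{2q}$ search. ``Chasing labels along Menger paths'' is not a well-defined reconstruction: the Menger paths have unbounded length and the vertices of $C_s\cup C_t$ need not lie on them at all. Compare this with the paper's argument, which goes in the opposite direction: rather than seeking a separator with both sides small, it \emph{grows} connected sets $S\ni s$ and $T\ni t$ by repeatedly branching on a vertex of $N(S)\setminus N(T)$ (or symmetrically), using that $|N(S)|<k$ when no solution has yet been found. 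After at most $2q$ branching steps both $|S|,|T|\geq q$, at which point $(q,k)$-unbreakability forces every minimal $ST$-separator (hence some minimal $st$-separator) to have size at least $k$. The $(k-1)^{2q}$ factor thus counts branches of a depth-$2q$ tree with fan-out at most $k-1$, not labelings of an unknown vertex set.
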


\begin{restatable}{theorem}{thmoctunbreak}\label{thm:oct-unbreak}
For any positive integers \( q, k \geq 1 \), 
{\sc Maximum Minimal OCT} on \((q,k)\)-unbreakable graphs on $n$ vertices can be solved in time \( 2^{(qk)^{\mathcal{O}(q)}} \cdot n^{\mathcal{O}(1)} \).
\end{restatable}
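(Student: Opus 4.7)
The plan is to combine a local minimality certificate with a structural exploitation of $(q,k)$-unbreakability. The first step is to record a useful reformulation: $Z$ is a minimal OCT of $G$ if and only if $G-Z$ is bipartite \emph{and} for every $z\in Z$, iterating Observation~\ref{delete unnecessary vertex} gives an induced odd cycle through $z$ whose other vertices lie entirely in $V(G)\setminus Z$. In particular, if $(A_z,B_z)$ is the bipartition of the component of $G-Z$ containing the neighbors of $z$, then $z$ has two neighbors on the same side of $(A_z,B_z)$ that are joined by an induced even-length path in $G-Z$. This gives a purely \emph{local} witness for each $z\in Z$ that can be verified in polynomial time once $Z$ and the bipartition of $G-Z$ are fixed.

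The second step is a structural lemma that, via $(q,k)$-unbreakability, bounds the ``complexity'' of any hypothetical minimal OCT $Z^\star$ of size at least $k$, even though $|Z^\star|$ itself can be arbitrarily large. The idea is to inspect the connected components of $G-Z^\star$. Any size-$k$ subset $S\subseteq Z^\star$ is a separator, so by unbreakability all but one component of $G-S$ has size at most $q-1$; iterating this over carefully chosen size-$k$ subsets of $Z^\star$ (chosen so that the corresponding small sides cover the neighbors of each $z\in Z^\star$ on its witness cycle) should force the following: $G-Z^\star$ has at most one component of size $\ge q$, and the rest of $V(G)\setminus Z^\star$, together with the ``active'' part of $Z^\star$ (vertices whose minimality witnesses touch small components or the boundary of the large one), fits into a skeleton of size $(qk)^{O(q)}$.

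Given the skeleton lemma, the algorithm is enumeration plus propagation. First, enumerate candidate skeletons: a subset $R\subseteq V(G)$ of size $(qk)^{O(q)}$, a labeling $R\to\{A,B,Z\}$, and, for each small component, its full vertex partition; this contributes $2^{(qk)^{O(q)}}$ guesses. Second, for each guess use standard $2$-coloring on $G-R$ to propagate the bipartition to the (unique) large component of $G-Z$, which pins down $Z\cap (V(G)\setminus R)$ as precisely the vertices whose propagated color is inconsistent. Third, verify in polynomial time that $G-Z$ is bipartite, that $|Z|\ge k$, and that each $z\in Z$ admits an induced even-length $G-Z$-path between two same-color neighbors, which certifies minimality by our first step. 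Total running time is $2^{(qk)^{O(q)}}\cdot n^{O(1)}$, as required.

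The main obstacle will be the structural skeleton lemma. Unbreakability controls only vertex sets of size at most $k$, whereas $Z^\star$ may be much larger, and the relevant ``extension'' question is NP-hard by Observation~\ref{delete unnecessary vertex} together with Lemma~\ref{existence of odd cycle}, so we cannot legitimately guess a portion of $Z^\star$ and hope to extend it. The delicate point is to argue that \emph{most} of $Z^\star$ is forced to be determined by the global bipartition of the giant component of $G-Z^\star$, while the few vertices that are not so determined sit inside a bounded skeleton governed by the small components. Getting the combinatorics right, and in particular obtaining the $(qk)^{O(q)}$ skeleton bound rather than something worse, is where I expect the bulk of the technical work to lie.
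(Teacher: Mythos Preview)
Your plan rests on a ``skeleton lemma'' that you explicitly leave open, and unfortunately the lemma as you state it is false. Take $t$ disjoint copies $H_1,\ldots,H_t$ of $K_{n,n}$ and add vertices $z_1,\ldots,z_{k+1}$, each adjacent to every vertex of every $H_j$. Removing any $k$ vertices leaves some $z_i$ present, and that $z_i$ keeps the whole graph connected, so $G$ is $(q,k)$-unbreakable for any $q\le 2n$. The set $Z^\star=\{z_1,\ldots,z_{k+1}\}$ is a minimal oct (each $z_i$ sees both sides of each $H_j$), yet $G-Z^\star$ has $t$ components, each of size $2n\ge q$, with $t$ arbitrary. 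So ``$G-Z^\star$ has at most one component of size $\ge q$'' fails. More generally, your sentence ``any size-$k$ subset $S\subseteq Z^\star$ is a separator'' is simply wrong: an odd cycle transversal need not separate anything, and size-$k$ subsets of it even less so; unbreakability only tells you what happens \emph{when} a set of size $\le k$ separates, not that it does.

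The propagation step is also circular. You want to $2$-colour the large component of $G-Z$, but you do not know $Z$; if instead you propagate on $G-R$ for a small $R$, that graph still contains odd cycles and the notion of ``vertices whose propagated colour is inconsistent'' is order-dependent and does not single out a minimal oct. Concretely, attach $m$ triangles $a_ib_ic_i$ to a large $K_{n,n}$ by making each $a_i$ adjacent to three vertices on one side; every minimal oct picks exactly one of $b_i,c_i$ per triangle, but naive propagation from the bipartite core flags both, yielding a non-minimal set. Distinguishing the $2^m$ valid choices cannot be done with a skeleton of size independent of $m$. (There is also a minor parity slip in your first step: the two witnessing neighbours of $z$ lie on \emph{different} sides and are joined by an \emph{odd}-length path in $G-Z$, not same side/even.)

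The paper takes an entirely different route and never attempts to describe $Z^\star$. It proves two sufficient yes-conditions on $(q,2k)$-unbreakable graphs: a long induced odd cycle (length $\ge 2q+2$) forces a large minimal oct, and a large family of short induced odd cycles does too, via the Sunflower Lemma (the core of the sunflower is bipartite, so a greedy oct avoiding the core must hit each petal). It then gives, for each vertex $x$, a bounded-depth branching procedure that either exhibits one of these two yes-certificates, finds an induced odd cycle through $x$, or certifies that no induced odd cycle passes through $x$, in which case $x$ is irrelevant by Observation~\ref{delete unnecessary vertex}. Iterating, either a yes is reported or the graph shrinks to $(qk)^{O(q)}$ vertices, where brute force gives the claimed bound. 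If you want to salvage your direction, you would need a genuinely new structural statement replacing the skeleton lemma; as written, the plan does not go through.
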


\noindent{\bf \textsc{MaxMin $st$-Sep} on $(q,k)$-unbreakable graphs.} 
To prove Theorem~\ref{thm:st-unbreak}, 
we develop a branching algorithm which exploits the unbreakability of the input graph.
The key observation behind the algorithm is that for any two vertex sets $S$ and $T$ such that $s \in S$, $t \in T$ and $G[S]$ and $G[T]$ are connected, 
 every minimal $ST$-separator is also a minimal $st$-separator.
As we are working on $(q, k)$-unbreakable graphs, 
if we manage to find such sets where $|S| > q$ and $|T| > q$, 
then every minimal $ST$-separator has size at least $k$.
In this case, we can construct a minimal $ST$-separator greedily in polynomial time, which can then be returned as a solution. 
Therefore, the goal of our branching algorithm is to construct such sets $S$ and $T$. 

\sloppy The algorithm begins with $S = \{s\}$ and $T = \{t\}$.
We use a reduction rule which ensures that at any point $N(S)$ and $N(T)$ (the neighborhoods of $S$ and $T$) are minimal $ST$-separators in $G$.
Clearly, we can assume that $|N(S)| < k$ and $|N(T)| < k$; otherwise, we have already found a solution. 
A crucial observation is that if there exists a minimal $ST$-separator $Z$ of size at least $k$, then there exists a vertex $u \in N(S)\setminus N(T)$ (resp.~$u \in N(T) \setminus N(S)$) such that $u \notin Z$.
This implies that such a vertex $u$ remains reachable from $S$ even after removing the solution $Z$.
This observation allows us to grow the set $S$ to $S \cup \{u\}$ (resp.~$T$ to $T \cup \{u\}$).
Since $|N(S)| < k$ (resp.~$|N(T)| < k$),
one can branch on all possible vertices $u \in N(S)$ (resp.~$u \in N(T)$) and in each branch grow $S$ (resp.~$T$).
Based on whether $\min(|S|, |T|)$ is $|S|$ or $|T|$, we branch on the vertices in $N(S)\setminus N(T) $ or $N(T)\setminus N(S)$, respectively, to increase the size of these sets.
Once both $S$ and $T$ have sizes of at least $q$, we can greedily construct and output a minimal $ST$-separator (which is also a minimal $st$-separator) of size at least $k$.\\

\noindent{\bf \textsc{MaxMin OCT} on $(q,k)$-unbreakable graphs.}
The algorithm for this case uses two key lemmas, both of which provide sufficient conditions for a yes instance.
Here the input is a $(q,2k)$-unbreakable graph.

Our first key lemma (Lemma~\ref{long induced odd cycle}) says that 
 if there exists an induced odd cycle of length at least $2q+2$ in $G$, 
 then there always exist a minimal oct in $G$ of size at least $k$.
The proof of this result is based on a branching algorithm, which works similarly to the branching algorithm for the {\sc MaxMin $st$-Separator} problem on $(q, k)$-unbreakable graphs, by carefully selecting the sets $S$ and $T$ at the beginning of the algorithm.
Note that we cannot use this lemma, on its own as a a stopping criterion, because one does not know how to find a long induced odd length cycle efficiently. Nevertheless, as we see later it becomes useful together with our second sufficient condition.

Our second key lemma (Lemma~\ref{lot of small odd cycles}), which states a second sufficient condition,
says that if there exists a large enough family of distinct (and not necessarily disjoint) induced odd cycles in $G$ of lengths at most $2q+1$, 
then there always exists a minimal oct in $G$ of size at least $k$. 
The proof of this result relies on the Sunflower Lemma~\cite{erdos1960intersection,marekcygan}, along with the observation that the subgraph induced by the core of the sunflower must be bipartite. 
Such a large sunflower then serves as a certificate that any oct which is disjoint from the core of the sunflower is large. 
Also since the core is bipartite, there exists an oct that is disjoint from it.

Another simple, yet important observation is that, if a vertex is not part of any induced odd cycle, then deleting this vertex from the graph does not affect the solution. Again, the problem here is that determining whether a vertex passes through an induced odd cycle is NP-hard (see Lemma~\ref{existence of odd cycle}), 
therefore we cannot use it as a reduction rule.

Finally, using these two sufficient conditions and the observation above, 
we provide an FPT algorithm that, 
given a vertex $x$, either outputs an induced odd cycle containing $x$ if it exists, or concludes correctly that one of the two scenarios mentioned above occurs. In the later situation we are done. Also if the induced odd cycle containing $x$, which is returned, is long, then also we are done because of the first sufficient condition. If the algorithm outputs, that there is no induced odd cycle through $x$, then we can safely delete $x$ from the graph and reduce its size. Because of deleting such vertices, the new graph may no longer be unbreakable, in which case we start solving the problem on the reduced graph completely from scratch.

By running this algorithm for each $x \in V(G)$, we reduce the graph to one where each vertex is contained in some small induced odd cycle. 
If such a graph contains a sufficiently large number of vertices, 
it guarantees the existence of a large family of distinct small induced odd cycles in $G$, each of length at most $2q+2$, in which case we can return a yes instance because of the second sufficient condition. 
This result finally allows us to bound the number of vertices in the graph by $(qk)^{\mathcal{O}(q)}$.
We can then solve the problem  using a  brute-force algorithm on this graph.

\section{Preliminaries}\label{sec:prelims}
Throughout this article, $G=(V,E)$ denotes a finite, simple and undirected graph of order $|V|=n$.
The {\it (open) neighbourhood} $N_G(v)$ of a vertex 
$v\in V(G)$ is the set $\{u~|~(u,v)\in E(G)\}$. The {\it closed neighbourhood} $N_G[v]$ of a vertex $v\in V(G)$ is the set
$\{v\} \cup N_G(v)$.  The {\it degree} of $v\in V(G)$ is $|N_G(v)|$ and is denoted by $d_G(v)$. 
The subgraph induced by $D\subseteq V(G)$ is denoted by $G[D]$. 
For $X,Y \subseteq V(G)$ such that $X\cap Y = \emptyset$, $E_G(X,Y)$ denote the edges of $G$ with one endpoint in $X$ and the other in $Y$.
We will drop the subscripts in the above notation, whenever it is clear from the context.
For $s,t \in V(G)$, by an $st$-path in $G$ we mean a path from $s$ to $t$ in $G$. For $S,T \subseteq V(G)$,
by an $ST$-path we mean a path between a vertex of $S$ to a vertex of $T$.
For $i \in \mathbb{N}$, $[i]$ denotes the set $\{1, . . . , i\}$.

Let $G$ be a graph. 
A pair $(X, Y )$, where $X \cup Y = V(G)$, is called a \emph{separation} if $E(X \setminus Y, Y \setminus X) = \emptyset$. The order of $(X, Y )$ is $ |X \cap Y|$.
If there exists a separation $(X, Y )$ of order at most $k$ such that $|X \setminus Y | \geq q$ and $|Y \setminus X| \geq q$, then $G$ is {\em $(q, k)$-breakable} and the separation $(X,Y)$
 is called a {\em witnessing separation} for the $(q,k)$-breakability of $G$.
Otherwise, $G$ is {\em $(q, k)$-unbreakable}.

We refer to \cite{marekcygan} for the formal definition of Counting Monadic Second Order (CMSO) logic.
We will crucially use the following result of Lokshtanov et al.~\cite{Lokshtanov2018ReducingCM} that allows one to show that a CMSO-expressible graph problem is FPT by designing an FPT algorithm for the problem on $(q,k)$-unbreakable graphs, for any $k$ and a sufficiently large $q$ that depends only on $k$.

\prop*

Next, we state the Sunflower Lemma which is used in this paper. 
We first define the terminology used in the statement of the next lemma. 
Given a set system $(U,\mathcal{F})$, where $U$ is a set and $\mathcal{F}$ is a family containing distinct subsets of $U$,
a \emph{sunflower} with $k$ petals and a core $Y$ is a collection of sets $S_1, S_2, \dots, S_k \in \mathcal{F}$ 
such that $S_i \cap S_j = Y$ for all $i \neq j$.
The sets $S_i \setminus Y$ are called the \emph{petals} of this sunflower.
If the sets in $\mathcal{F}$ are distinct and $k \geq 2$, then none of the petals of the sunflower are empty. 
Note that a family of pairwise disjoint sets is a sunflower (with an empty core).

\begin{theorem}[Sunflower Lemma,~\cite{erdos1960intersection,marekcygan}]\label{thm:sunflower} Let $\mathcal{F}$ be a family of distinct sets over a universe $U$, such that each set in  $\mathcal{F}$ has cardinality exactly $d$. 
If $|\mathcal{F}| > d!(k - 1)^d$, then  $\mathcal{F}$ contains a sunflower with $k$ petals and such a sunflower can be computed in time polynomial in $|\mathcal{F}|$, $|U|$, and $k$.
\end{theorem}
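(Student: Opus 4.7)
The plan is to prove the Sunflower Lemma by induction on $d$, the common cardinality of the sets in $\mathcal{F}$, following the classical Erd\H{o}s--Rado argument, and extracting the polynomial-time construction from the induction itself.

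For the base case $d = 1$, every set in $\mathcal{F}$ is a singleton, and since $|\mathcal{F}| > 1!(k-1)^1 = k-1$, there are at least $k$ distinct singletons; these are pairwise disjoint and form a sunflower with $k$ petals and an empty core.

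For the inductive step, assume the claim holds for $d-1$ and fix $\mathcal{F}$ with $|\mathcal{F}| > d!(k-1)^d$ and every set of size exactly $d$. First I would greedily compute a maximal collection $\mathcal{S} = \{S_1,\dots,S_m\} \subseteq \mathcal{F}$ of pairwise disjoint sets by repeatedly picking any set from $\mathcal{F}$ that is disjoint from those already chosen. If $m \geq k$, the prefix $S_1,\dots,S_k$ is already a sunflower with $k$ petals and empty core, and we are done. Otherwise $m \leq k-1$, so the union $B = \bigcup_{i=1}^m S_i$ satisfies $|B| \leq (k-1)d$. By maximality of $\mathcal{S}$, every set in $\mathcal{F}$ must intersect $B$. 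Averaging, some element $x \in B$ lies in at least
\[
\frac{|\mathcal{F}|}{|B|} \;>\; \frac{d!(k-1)^d}{(k-1)d} \;=\; (d-1)!(k-1)^{d-1}
\]
sets of $\mathcal{F}$. Form the family $\mathcal{F}_x = \{\, S \setminus \{x\} : x \in S \in \mathcal{F}\,\}$; its members are distinct sets of size $d-1$, and $|\mathcal{F}_x| > (d-1)!(k-1)^{d-1}$. By the induction hypothesis, $\mathcal{F}_x$ contains a sunflower with $k$ petals, and reintroducing $x$ to each of its sets yields a sunflower with $k$ petals in $\mathcal{F}$ whose core contains $x$.

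The algorithmic claim follows because every step above is explicitly constructive: building the maximal disjoint family is a single greedy pass, locating the heavy element $x$ needs one count over $\mathcal{F} \times B$, and the recursion has depth at most $d$ with each level running in time polynomial in $|\mathcal{F}|$, $|U|$, and $k$. The main technical point to watch is the arithmetic in the averaging step, i.e.\ verifying that the bound $d!(k-1)^d$ on $|\mathcal{F}|$ degrades to exactly $(d-1)!(k-1)^{d-1}$ on $|\mathcal{F}_x|$ so that the inductive hypothesis applies; this hinges on the clean estimate $|B| \leq (k-1)d$, which is where the choice of greedy packing and the case split $m \geq k$ vs.\ $m < k$ pay off. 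I do not foresee any serious obstacle beyond this bookkeeping.
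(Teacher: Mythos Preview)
Your proof is correct; it is precisely the classical Erd\H{o}s--Rado induction on $d$, and the algorithmic bookkeeping you sketch is sound (note that $d \leq |U|$, so the recursion depth is polynomial in the input size). The paper itself does not prove this theorem: it is stated in the preliminaries as a known tool, with a citation to the original Erd\H{o}s--Ko paper and to a textbook, and is used as a black box in the proof of Lemma~\ref{lot of small odd cycles}. So there is nothing to compare against; your argument simply supplies the standard proof that the paper omits.
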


\subsection{NP-hardness of \textsc{Maximum Minimal OCT}}\label{MMOCT is NP-hard}

\begin{lemma}
    {\sc Maximum Minimal OCT} is NP-hard.
\end{lemma}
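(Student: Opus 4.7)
The plan is to exhibit a polynomial-time reduction from \textsc{MaxMin $st$-Sep}, which is NP-hard by Hanaka et al.~\cite{HANAKA2019294}. Given an instance $(G, s, t, k)$ of \textsc{MaxMin $st$-Sep} (where one may assume $k \geq 2$ and $s, t$ are non-adjacent), I would construct an instance $(G', k')$ of \textsc{MaxMin OCT} as follows: first subdivide every edge of $G$ exactly once, obtaining a bipartite graph $G_1$ in which $V(G)$ and the subdivision vertices form the two color classes, so every $st$-path in $G_1$ has even length; then add the edge $(s, t)$, and set $k' = k$. In the resulting graph $G'$, every odd cycle must use the edge $(s, t)$ and decomposes uniquely as this edge together with an even-length $st$-path in $G_1$.

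For the forward direction, a minimal $st$-separator $Z$ in $G$ of size at least $k$ is a minimal OCT of $G'$ of size at least $k$: removing $Z$ destroys every $st$-path in $G_1$ and hence every odd cycle of $G'$, and for each $z \in Z$ the minimality of $Z$ in $G$ furnishes an $st$-path through $z$ in $G - (Z \setminus \{z\})$, which lifts to an odd cycle through $z$ in $G' - (Z \setminus \{z\})$ via the edge $(s, t)$.

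For the reverse direction, let $Z'$ be a minimal OCT of $G'$ of size at least $k \geq 2$. One first shows $s, t \notin Z'$: since every odd cycle passes through both $s$ and $t$, the vertex $s$ alone is an OCT, so if $s \in Z'$ then no other element of $Z'$ can be essential, forcing $Z' = \{s\}$ and contradicting $|Z'| \geq 2$. Hence $Z'$ is a minimal $st$-separator in $G_1$. The main obstacle is that $Z'$ may contain a subdivision vertex $w_{uv}$, which encodes an edge-cut of $G$ rather than a vertex-cut, and the naive construction can in fact admit minimal OCTs strictly larger than any minimal $st$-separator of $G$. To handle this, I would employ an exchange argument replacing each $w_{uv} \in Z'$ by $u$ or $v$, chosen so that the essentiality certificates (odd cycles through each remaining element of $Z'$ avoiding the rest of $Z'$) continue to hold; the subtle point is that a naive replacement by one endpoint can destroy the certificate for a different element of $Z'$ whose minimality witness passes through that endpoint. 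I would address this with a case analysis on how the certificate paths interact with $u$ and $v$, and if both replacements simultaneously fail for some element I would fall back on a slight modification of the construction (for instance, replacing each edge of $G$ by a small parity-preserving gadget that rules out the problematic configurations) so as to make the exchange always feasible. This converts $Z'$ into a minimal $st$-separator of $G$ of the same size, completing the equivalence and establishing NP-hardness of \textsc{MaxMin OCT}.
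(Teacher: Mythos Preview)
Your reduction is broken, and the exchange argument you sketch cannot repair it. Take $G$ on vertices $\{s,a,b,t\}$ with edges $(s,a),(s,b),(a,t),(b,t),(a,b)$. The only minimal $st$-separator in $G$ is $\{a,b\}$, so $(G,s,t,3)$ is a no instance. But in your $G'$, the set $Z'=\{w_{sa},w_{bt},w_{ab}\}$ of three subdivision vertices is a minimal OCT: deleting it kills every $st$-path in $G_1$, and restoring any one of the three yields an $st$-path (namely $s\text{--}a\text{--}t$, $s\text{--}b\text{--}t$, or $s\text{--}b\text{--}a\text{--}t$ respectively). Hence $(G',3)$ is a yes instance. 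No exchange can help here, since $G$ has only two non-terminal vertices; there is simply no minimal vertex separator of size $3$ to exchange into. Your fallback to an unspecified ``parity-preserving gadget'' is not a proof.

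The paper sidesteps this entirely by exploiting that \textsc{MaxMin $st$-Sep} is already NP-hard on \emph{bipartite} graphs (Hanaka et al.). Starting from a bipartite $G$, no subdivision of $G$ is needed; one only adds a short $st$-path of the appropriate parity (one or two new internal vertices depending on whether $s,t$ lie on the same side of the bipartition). Then in the backward direction, a minimal OCT of size at least $k\ge 2$ cannot contain any of these one or two new vertices (each alone is already an OCT), so it lies entirely inside $V(G)$ and is directly a minimal $st$-separator of $G$. Your subdivision trick was aimed at manufacturing bipartiteness, but the cost is that subdivision vertices encode edge cuts rather than vertex cuts, and maximum minimal edge cuts can strictly exceed maximum minimal vertex cuts (as the example above shows). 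Using the bipartite hardness as the source problem is the missing idea.
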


\begin{proof} It was shown in \cite{HANAKA2019294} that {\sc Maximum Weight Minimal $st$-Separator} is NP-hard on bipartite graphs, even when all vertex weights are identical.
This implies that {\sc Maximum Minimal $st$-Separator} is NP-hard on bipartite graphs. 
We provide a polynomial-time reduction from the {\sc Maximum Minimal $st$-Separator} to the {\sc Maximum Minimal OCT}. 
Given an instance \(I = (G, s, t, k)\) of the {\sc Maximum Minimal $st$-Separator}, we construct an instance \(I' = (G', k' = k)\) of the {\sc Maximum Minimal OCT} as follows. We assume that $k>1$.
We consider two cases based on the bipartition of \(G\):

\noindent \textbf{Case 1:} If \(s\) and \(t\) are on the same side of the bipartition, we add an edge between \(s\) and \(t\), and subdivide it by adding two vertices \(u\) and \(v\), creating a new graph \(G'\).

\noindent \textbf{Case 2:} If \(s\) and \(t\) are on opposite sides of the bipartition, we add a subdivided edge between \(s\) and \(t\) with one new vertex \(u'\), resulting in \(G'\).

 In both cases, the newly added path between \(s\) and \(t\) in \(G'\) is denoted by \(P'\).

 We now prove that the instances \(I\) and \(I'\) are equivalent.

 In the forward direction, let \(Z\) be a minimal $st$-separator in \(G\) of size at least \(k\). We claim that \(Z\) is a minimal oct in \(G'\). Since \(G\) is bipartite, any odd cycle in \(G'\) must involve \(s\) and \(t\). Removing \(Z\) from \(G\) separates \(s\) and \(t\), meaning \(G'-Z\) contains only the newly added path \(P'\), ensuring no odd cycles in \(G'\). Thus, \(Z\) is an oct in \(G'\).

 Next, we show that \(Z\) is minimal. For every \(z \in Z\), the graph \(G - (Z \setminus \{z\})\) contains a path between \(s\) and \(t\). Lets call it $P$. In Case 1, this path is even-length, and in Case 2, it is odd-length. In both cases, the graph
\(G' - (Z \setminus \{z\})\) contains two vertex-disjoint paths $P$ nad $P'$ between \(s\) and \(t\) of different parities, forming an odd cycle. Therefore, \(Z\) is a minimal oct in \(G'\).

 In the backward direction, let \(Z'\) be a minimal oct in \(G'\) of size at least \(k\). Since \(k > 1\), \(Z'\) does not include the newly added vertices \(u, v\) (in Case 1) or \(u'\) (in Case 2). We claim that \(Z'\) is a minimal $st$-separator in \(G\).

 If \(Z'\) were not an $st$-separator in \(G\), then there would exist a path between \(s\) and \(t\) in \(G - Z'\), implying the presence of an odd cycle in \(G' - Z'\) involving the path \(P'\). 
 Since \(Z'\) is a minimal oct in \(G'\), for every $z\in Z'$, the graph \(G' - (Z'\setminus \{z\}) \) contains an odd cycle. Therefore, \(G' - (Z'\setminus \{z\}) \) must contain a path $P$ which is a vertex-disjoint $st$-path from $P'$. This implies that \(G - (Z' \setminus \{z\})\) contains an $st$-path, proving that \(Z'\) is a minimal $st$-separator in \(G\).
 Thus, \(I\) and \(I'\) are equivalent, completing the proof. 
\end{proof}

\section{\textsc{Maximum Minimal $st$-Separator} parameterized by the solution size}\label{st separator section} 

The goal of this section is to prove Theorem~\ref{Mainthm}.

\thmstsep*

Let $(G,k)$ be an instance of {\sc MaxMin $st$-separator}.
The goal is to reduce the task to designing an algorithm for $(q,k)$-unbreakable graphs.
For this we first show that the problem can be expressed in CMSO (in fact in MSO).
Since {\sc MaxMin $st$-Separator} is a maximization problem, 
the size of the solution can potentially be as large as $\mathcal{O}(n)$.
To formulate a CMSO sentence that is bounded by a function of $k$, 
we focus on a $k$-sized subset of the solution and encode the minimality of each of its vertices in a way that allows for its extension to a ``full-blown'' minimal solution. 

\begin{lemma}\label{minimal st separator CMSO}
   \textsc{Maximum Minimal $st$-Separator} is CMSO-definable with a formula of length $\mathcal{O}(k)$.
\end{lemma}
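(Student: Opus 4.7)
The plan is to exhibit an explicit CMSO sentence $\phi_k$ (in fact an MSO sentence) whose length is $\mathcal{O}(k)$ and that evaluates to true on a graph $G$ with distinguished vertices $s,t$ if and only if $G$ has a minimal $st$-separator of size at least $k$. The guiding idea is to decouple the ``size at least $k$'' requirement, which is the only reason the sentence must depend on $k$ at all, from the separator and minimality conditions, which can be written with a constant-length tail.

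Concretely, $\phi_k$ first-order existentially quantifies $k$ pairwise distinct vertices $v_1, \ldots, v_k$, then second-order existentially quantifies a vertex set $Z$ that contains each $v_i$, and finally asserts two properties of $Z$: first, $Z$ is an $st$-separator, i.e.\ $s$ and $t$ lie in distinct components of $G - Z$; second, $Z$ is minimal, i.e.\ for every $z \in Z$ the vertices $s$ and $t$ lie in the same component of $G - (Z \setminus \{z\})$. Both properties are written using the standard MSO connectivity subformula $\mathrm{Conn}(s,t,W)$ stating that every subset $A \subseteq W$ containing $s$ that is closed under taking $G$-neighbors inside $W$ also contains $t$; this subformula has constant length. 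The separator condition then becomes $\neg\,\mathrm{Conn}(s,t, V \setminus Z)$ and the minimality condition becomes $\forall z\,(z \in Z \rightarrow \mathrm{Conn}(s,t,(V \setminus Z) \cup \{z\}))$.

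A length calculation gives the claimed bound. The prefix $\exists v_1 \cdots \exists v_k$, together with the $k$ memberships $v_i \in Z$ and the pairwise distinctness assertions, contributes $\mathcal{O}(k)$ symbols under a standard compact encoding of ``all distinct''. The single second-order quantifier $\exists Z$, the negated connectivity subformula, and the minimality clause are all of constant size independent of $k$ and $n$. Hence $|\phi_k| = \mathcal{O}(k)$.

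The only conceptual obstacle is the one flagged in the paragraph preceding the lemma: a minimal $st$-separator can have size $\Theta(n)$, so one cannot explicitly list its members inside the sentence, and one cannot afford a formula whose length depends on $|Z|$. The ``$k$-witness'' trick sidesteps this by using $k$ first-order variables purely to certify the lower bound $|Z| \geq k$, while the remainder of $Z$ is hidden inside a single second-order quantifier whose properties are pinned down by the constant-length separator and minimality tails. Once this bookkeeping is in place, verifying correctness reduces to two routine observations: the $v_i$'s witness $|Z| \geq k$, and the MSO connectivity predicate faithfully encodes both ``$Z$ disconnects $s$ from $t$'' and ``removing any single vertex of $Z$ restores an $st$-path''.
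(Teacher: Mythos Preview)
Your proof is correct and follows the same overall strategy as the paper: existentially quantify the set $Z$, certify $|Z|\ge k$ by naming $k$ pairwise-distinct first-order witnesses $v_1,\ldots,v_k\in Z$, and express the separator and minimality conditions through an MSO connectivity predicate. The one noteworthy difference is in the minimality clause: you impose full minimality of $Z$ via a single universal quantifier $\forall z\in Z$, so your $Z$ is itself a minimal $st$-separator and correctness is immediate; the paper instead only asserts the reconnection property for the $k$ named witnesses (a conjunction of $k$ clauses), so its $Z$ need not be minimal, and an extra line is spent observing that such a $Z$ still contains a minimal separator of size at least $k$. Your encoding is slightly more direct, while the paper's version dovetails with the ``certificate for minimality'' language used elsewhere in the paper.
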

\begin{proof}
The instance $(G,k)$ is a yes instance of {\sc MaxMin $st$-Sep} if there
exists $Z \subseteq V(G)$ of size at least $k$ such that 
$G-Z$ has no $st$-path (equivalently $s$ and $t$ are in different connected components of $G-Z$), and
for each $v\in Z$, $G-(Z\setminus \{v\})$ has an $st$-path (that is $s$ and $t$ are in the same connected component of $G-(Z\setminus \{v\})$).

Alternately, suppose $Z \subseteq V(G)$ (of arbitrarily large size) 
such that $G-Z$ has no $st$-path,
and $Z$ contains $k$ distinct vertices 
$v_1, \ldots, v_k$ such that for each $i \in [k]$, 
$G-(Z\setminus \{v_i\})$ contains an $st$-path.
Then $Z$ may not be a minimal $st$-separator but it always contains a minimal $st$-separator of size at least $k$.
In fact, $(G,k)$ is a yes instance if and only if such a set $Z$ exists. 
These properties of $Z$ can be incorporated as a CMSO formula $\psi$ as follows, where $\mathbf{conn}(U)$ is a CMSO sentence that checks whether a vertex set $U$ induces a connected graph. The CMSO description of CMSO can be, for example, found in~\cite{marekcygan}.

\begin{align*}
   \psi  =&  \exists{Z \subseteq V(G)} \Bigg( 
   \exists v_1, v_2, \dots, v_k \in Z \left ( \bigwedge\limits_{1\leq i<j\leq k} v_{i} \neq v_{j} \right ) \\ 
&  \land \neg\exists {U \subseteq V(G)\setminus Z} 
\Bigg (  ( s \in U  ) \land  ( t \in U  ) \land \mathbf{conn}(U) \Bigg ) \\ 
 &\bigwedge\limits_{i=1}^{k}  \exists U \in V(G) \setminus (Z \setminus \{v_{i}\} )  \Bigg (    (s \in U )   \land   (t\in U)    \land \mathbf{conn}(U)  \Bigg )  \Bigg ) 
\end{align*}

It is clear that the size of the above formula $\psi$ depends linearly on $k$. 
\end{proof}

From Lemma~\ref{minimal st separator CMSO} and Proposition~\ref{cmso}, to prove Theorem~\ref{Mainthm},
it is enough to prove Theorem~\ref{thm:st-unbreak}.

\thmstsepunbreak*

We prove Theorem~\ref{thm:st-unbreak} in Section~\ref{sec:st-unbreak}.
As mentioned earlier, given a vertex set $V'$, it may not always be possible to extend it to a minimal $st$-separator. Below, we give a definition for a \textit{certificate} for the $st$-separator minimality of a set $V'\subseteq V(G)$, which guarantees the existence of a minimal $st$-separator that contains (extends) the set $V'$.

\begin{definition}\label{def:certificate of minimality}
    Let $G$ be a graph, $s,t \in V(G)$ and $V' \subseteq V(G)$. 
    We say that two sets of vertices $S$ and $T$ serve as a \textit{certificate} for the $st$-separator minimality for $V'$ if the following conditions hold: $s\in S$ and $t\in T$, $S \cap T = \emptyset$, $G[S]$ and $G[T]$ are connected subgraphs, $E_G(S,T) = \emptyset$, and for every $v \in V'$, the subgraph $G[S \cup T \cup \{v\}]$ is connected. 
    Note that $V' \cap (S \cup T) = \emptyset$.
\end{definition}

\begin{lemma}\label{greedy alg minimal st separator}
    Let $G$ be a graph, and let $s, t \in V(G)$. If there exists a \textit{certificate} for the $st$-separator minimality of $V' \subseteq V(G)$, then there exists a minimal $st$-separator in $G$ that includes all the vertices of $V'$.
\end{lemma}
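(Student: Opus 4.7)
My plan is to construct the desired minimal $st$-separator by starting with an obvious (non-minimal) $st$-separator containing $V'$ and then greedily shrinking it while protecting $V'$. Concretely, set $Z_{0} = V(G) \setminus (S \cup T)$. Since $V' \cap (S \cup T) = \emptyset$, we have $V' \subseteq Z_{0}$. Moreover $Z_{0}$ is an $st$-separator: after removing $Z_{0}$, what remains is $G[S \cup T]$, and since $E_{G}(S,T) = \emptyset$, $s \in S$, $t \in T$ and $S \cap T = \emptyset$, the vertices $s$ and $t$ lie in different components.

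\textbf{Greedy shrinking.} Starting from $Z := Z_{0}$, iterate through the vertices of $Z_{0} \setminus V'$ in an arbitrary order; for each such $u$, replace $Z$ by $Z \setminus \{u\}$ whenever $Z \setminus \{u\}$ is still an $st$-separator of $G$. Let $Z$ denote the set obtained at the end. By construction $V' \subseteq Z \subseteq Z_{0}$ and $Z$ is an $st$-separator. For any $u \in Z \setminus V'$, when $u$ was processed, the current candidate $Z' \supseteq Z$ satisfied that $Z' \setminus \{u\}$ failed to separate $s$ from $t$; hence there was an $st$-path in $G - (Z' \setminus \{u\})$, and since $Z \setminus \{u\} \subseteq Z' \setminus \{u\}$, the same path avoids $Z \setminus \{u\}$. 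Thus every vertex of $Z \setminus V'$ is needed for $Z$ to separate $s$ and $t$.

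\textbf{Minimality on $V'$.} It remains to verify that for every $v \in V'$, the set $Z \setminus \{v\}$ is not an $st$-separator. Here we exploit the certificate conditions. Fix $v \in V'$. By assumption $G[S \cup T \cup \{v\}]$ is connected, while $E_{G}(S,T) = \emptyset$; therefore every $st$-path inside $G[S \cup T \cup \{v\}]$ must use $v$, and at least one such path $P$ from $s$ to $t$ exists, consisting solely of vertices in $S \cup T \cup \{v\}$. Because $Z \subseteq Z_{0} = V(G) \setminus (S \cup T)$, we have $(Z \setminus \{v\}) \cap (S \cup T \cup \{v\}) = \emptyset$, so $P$ survives in $G - (Z \setminus \{v\})$, proving that $Z \setminus \{v\}$ does not separate $s$ and $t$.

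\textbf{Main subtlety.} The only delicate point is making sure that the greedy trimming really produces a fully minimal separator, i.e.\ that we cannot be fooled by an ordering in which removing $u$ is unsafe early on but would become safe later. The argument above handles this by noting that later candidate sets are subsets of earlier ones, so if $Z' \setminus \{u\}$ failed to separate at processing time, then $Z \setminus \{u\}$ a fortiori fails to separate at the end. Combined with the $V'$-preservation argument, which uses \emph{exactly} the connectivity of $G[S \cup T \cup \{v\}]$ together with $E_{G}(S,T) = \emptyset$, this gives a minimal $st$-separator containing $V'$ and completes the proof.
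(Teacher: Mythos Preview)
Your proof is correct. Both your argument and the paper's are greedy constructions, but framed dually: the paper starts with $G' = G[S\cup T]$ and grows it vertex by vertex, placing a vertex $v$ into $Z$ exactly when adding $v$ to the current $G'$ would connect $S$ and $T$; you instead start with the full complement $Z_0 = V(G)\setminus(S\cup T)$ and shrink it greedily, never touching $V'$. The two procedures are essentially complementary (a vertex ends up in $Z$ in one iff it does in the other, up to ordering), and the crucial use of the certificate is identical in both: the path inside $G[S\cup T\cup\{v\}]$ witnesses that every $v\in V'$ is needed. Your monotonicity observation (if $Z'\setminus\{u\}$ fails to separate then so does the smaller $Z\setminus\{u\}$) is a clean way to avoid tracking the evolving $S,T$ explicitly, which makes your write-up slightly more self-contained than the paper's.
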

\begin{proof} 
Let $S$ and $T$ serve as a certificate for the $st$-separator minimality of $V'$. 
We will construct a set $V' \subseteq Z \subseteq V(G) \setminus (S \cup T)$ which is a minimal $st$-separator in $G$.
The set $Z$ is constructed iteratively. Initialize $Z:=\emptyset$ and $G':= G[S \cup T]$.
Fix an arbitrary ordering of the vertices in 
 $V(G) \setminus (S \cup T)$.

For each vertex $v$ in the prescribed order: (i) if  $G[S \cup T \cup \{v\}]$ is connected, then update $Z := Z \cup \{v\}$, otherwise (ii) if $G[S \cup T \cup \{v\}]$ is not connected, then update  $G' := G[V(G') \cup \{v\}]$, update $S$ to be the vertices reachable from old $S$ in $G'$, and update $T$ to be the vertices reachable from old $T$ in $G'$.

The process continues until all vertices in $V(G) \setminus V(G')$  have been processed. The final set $Z$ is then returned as a minimal $st$-separator of $G$. Note that if the initial sets $S$ and $T$ served as a \textit{certificate} for the minimality of the $st$-separator  $V'$, then  $V' \subseteq Z$, 
as the subgraph $G[S \cup T \cup \{v\}]$ will always be connected for each $v \in V'$  during every stage of the above process. 
\end{proof}

\subsection{\textsc{Maximum Minimal $st$-Separator} on $(q,k)$-unbreakable graphs}\label{sec:st-unbreak}

In this section, we prove Theorem~\ref{thm:st-unbreak}.
To prove Theorem~\ref{thm:st-unbreak} we design a branching algorithm that maintains a tuple $(G,S,T,k,q)$ where $G$ is a $(q,k)$-unbreakable graph, $S,T \subseteq V(G)$ and $q,k$ are positive integers. Additionally the sets $S,T$ satisfy the following properties.

\begin{enumerate}
    \item $s \in S$ and $t \in T$,
    \item $S \cap T=\emptyset$,
    \item both $G[S]$ and $G[T]$ are connected and
    \item $E(S,T) =\emptyset$. 
\end{enumerate}

An instance $(G,S,T,k,q)$ satisfying the above properties is called a {\em valid} instance.
Given a valid instance,
we design a branching algorithm that outputs a minimal $ST$-separator of $G$,
which is disjoint from $S \cup T$, 
and has size at least $k$, if it exists.
The algorithm initializes $S:=\{s\}$ and $T:=\{t\}$.
Note that the above-mentioned properties of the sets $S,T$ ensure that at each stage of the algorithm, 
every minimal $ST$-separator is also a minimal $st$-separator.

The algorithm has one reduction rule (Reduction Rule~\ref{Red1}), four stopping criteria (Reduction Rules~\ref{Red4},~\ref{Red3},~\ref{Red2} and~\ref{Red5}) and one branching rule (Branching Rule~\ref{bred:one}).
The branching rule is applied when neither the reduction rule nor the three stopping criterion can be applied.
The overall idea is the following. Observe that $N(S) \cap N(T)$ is a part of any minimal $ST$-separator. Therefore, if $|N(S) \cap N(T)| \geq k$, then we can correctly report that $G$ has a minimal $ST$-separator of size at least $k$.
Reduction Rule~\ref{Red1} ensures that $N(S)$ (resp.~$N(T)$) is a {\em minimal} $ST$-separator.
Therefore when Reduction Rule~\ref{Red1} is no longer applicable, if $|N(S)| \geq k$ (resp.~$|N(T)| \geq k$), then we can correctly report that $G$ has a minimal $ST$-separator of size at least $k$. In fact, $|N(S)|$ (resp.~$|N(T)|$) is a minimal $ST$-separator of size at least $k$.
Otherwise, we have that both $|N(S)| < k$ and $|N(T)| <k$.
In this case, we use the branching rule.
Say, without loss of generality that $|S| \leq |T|$.
Since $N(S)$ is a minimal $ST$-separator, but its size is strictly less than $k$ and every minimal $ST$-separator contains $N(S) \cap N(T)$, 
there exists a vertex in $N(S) \setminus N(T)$ that does not belong to the solution (if there exists a solution of size at least $k$).
In this case we branch on the vertices of $N(S)\setminus N(T)$. 
If we guess that a vertex $v \in N(S)$ does not belong to the solution, since $v \in N(S)$, $v$ remains reachable from $S$ after removing the solution. 
In this case, we update $S:= S \cup \{v\}$. 
Therefore, each application of the branching rule increases the size of the smaller of the two sets $S$ or $T$.
When both $|S| \geq q$ and $|T| \geq q$, from the $(q,k)$-unbreakability of $G$, we know that {\em every} $ST$-separator of $G$ has size at least $k$. And hence there is a minimal $st$-separator of size at least $k$.

Below we formalize the above arguments.
Given $I = (G,S,T,k,q)$, we define its measure $\mu(I) = q - \min(|S|,|T|)$. 
We state the reduction rules and a branching rule below. We apply the reduction rules in order exhaustively before applying the branching rule.

\begin{lemma}\label{lem:neg-measure}
    Let $I=(G,S,T,k,q)$ where $G$ is $(q,k)$-unbreakable.
    If $\mu(I)\leq 0$, then every minimal $ST$-separator,
    which is disjoint from $S \cup T$, is of size at least $k$.
\end{lemma}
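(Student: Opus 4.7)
The plan is to convert any minimal $ST$-separator $Z \subseteq V(G)\setminus(S\cup T)$ into a separation of $G$ of order exactly $|Z|$ with both sides of size at least $q$, so that the $(q,k)$-unbreakability of $G$ forces $|Z| > k$, which in particular yields $|Z|\ge k$. The only use of the hypothesis $\mu(I)\le 0$ is to unfold it into $|S|\ge q$ and $|T|\ge q$; this will supply the two large sides of the separation.

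Concretely, I would let $R_S$ be the union of the connected components of $G - Z$ that meet $S$, and set $X := R_S \cup Z$ and $Y := V(G)\setminus R_S$. Using validity of $I$ (namely $G[S]$ connected and $S \cap Z = \emptyset$), every vertex of $S$ is reachable from $S$ in $G - Z$, so $S \subseteq R_S$. The fact that $Z$ is an $ST$-separator together with $T \cap Z = \emptyset$ ensures $T \cap R_S = \emptyset$, so $T \subseteq Y\setminus X$. By the definition of $R_S$ as a union of components of $G - Z$, every edge leaving $R_S$ in $G$ must enter $Z$, hence $E(X\setminus Y, Y\setminus X) = E(R_S, V(G)\setminus(R_S\cup Z)) = \emptyset$. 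Thus $(X,Y)$ is a genuine separation of $G$ of order $|X\cap Y| = |Z|$.

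It remains to read off sizes: $|X\setminus Y| = |R_S| \ge |S| \ge q$ and $|Y\setminus X| = |V(G)\setminus(R_S\cup Z)| \ge |T| \ge q$. Since $G$ is $(q,k)$-unbreakable, no separation with both sides of cardinality at least $q$ can have order at most $k$, so $|Z| > k$ and, a fortiori, $|Z|\ge k$, which is the desired conclusion.

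I do not anticipate a genuine obstacle; the argument is essentially a verification. The only care needed is to keep track of the disjointness conditions ($Z \cap S = \emptyset$, $Z \cap T = \emptyset$, $R_S \cap Z = \emptyset$, $R_S \cap T = \emptyset$) so that both the ``both sides are large'' and ``order equals $|Z|$'' statements are clean. As a remark, the argument does not actually use minimality of $Z$: every $ST$-separator disjoint from $S \cup T$ in this setting has size at least $k$, a slightly stronger statement that may be useful elsewhere in the algorithm.
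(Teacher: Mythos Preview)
Your proof is correct and essentially identical to the paper's: both construct the separation $(R_S \cup Z,\, V(G)\setminus R_S)$ (the paper writes $C_S$ for your $R_S$) and invoke $(q,k)$-unbreakability. Your observation that minimality of $Z$ is never used, and that in fact $|Z|>k$, is accurate and worth keeping in mind.
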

\begin{proof}
If $\mu(I) \leq 0$, then $q \leq \min(|S|,|T|)$.
For the sake of contradiction, let us assume that there exists a minimal $ST$-separator $Z$ in $G$, which is disjoint from $S \cup T$, and has size strictly less than $k$. 
Note that $G\setminus Z$ contains two connected components of size at least $q$ each: one containing $S$, say $C_S$, and the other containing $T$. 
Consider the separation $(C_S \cup Z, V(G) \setminus C_S)$ of $G$. This is a witnessing separation that $G$ is $(q,k)$-breakable, which is a contradiction.
\end{proof}

The safeness of Reduction Rule~\ref{Red4} is immediate from Lemma~\ref{lem:neg-measure}.

\begin{red}\label{Red4}
    If $\mu(I) \leq 0$, then report a yes instance.    
\end{red}

\begin{red}\label{Red3}\rm
   If $|N(S) \cap N(T)| \geq k$, then report a yes instance.
\end{red}

\begin{lemma}\label{lem:safe-red3}
    Reduction Rule \ref{Red3} is safe.
\end{lemma}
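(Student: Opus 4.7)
The plan is to derive Reduction Rule \ref{Red3} directly from Lemma \ref{greedy alg minimal st separator} by showing that, whenever $|N(S) \cap N(T)| \geq k$, the sets $S$ and $T$ themselves form a certificate (in the sense of Definition \ref{def:certificate of minimality}) for the $st$-separator minimality of the vertex set $V' := N(S) \cap N(T)$. Once this is established, Lemma \ref{greedy alg minimal st separator} produces a minimal $st$-separator in $G$ that contains $V'$, and hence has size at least $k$, so the instance is indeed a yes instance.

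First I would check that $V' \cap (S \cup T) = \emptyset$. Since $N(\cdot)$ denotes the \emph{open} neighborhood, $N(S) \cap S = \emptyset$ and $N(T) \cap T = \emptyset$. Moreover, because $I$ is a valid instance we have $E_G(S,T) = \emptyset$, which forces $N(S) \cap T = \emptyset$ and $N(T) \cap S = \emptyset$. Combining these gives $V' = N(S) \cap N(T) \subseteq V(G) \setminus (S \cup T)$.

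Next I would verify the remaining certificate conditions. Properties 1--4 of Definition \ref{def:certificate of minimality} ($s \in S$, $t \in T$, $S \cap T = \emptyset$, $G[S]$ and $G[T]$ connected, and $E_G(S,T)=\emptyset$) all hold by the very assumption that $(G,S,T,k,q)$ is a valid instance. The key remaining condition to check is that $G[S \cup T \cup \{v\}]$ is connected for every $v \in V'$. This is the main (and only genuinely interesting) step: by definition of $V'$, any $v \in V'$ has a neighbor $u \in S$ and a neighbor $w \in T$, so in $G[S \cup T \cup \{v\}]$ the vertex $v$ is adjacent to both the already connected subgraph $G[S]$ and the already connected subgraph $G[T]$, merging them into one connected component.

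Having verified that $(S,T)$ is a certificate for $V'$, Lemma \ref{greedy alg minimal st separator} produces a minimal $st$-separator $Z \subseteq V(G) \setminus (S \cup T)$ with $V' \subseteq Z$. Since $|V'| = |N(S) \cap N(T)| \geq k$, we obtain $|Z| \geq k$, confirming that $(G,k)$ is a yes instance of \textsc{MaxMin $st$-Sep} and establishing the safeness of Reduction Rule \ref{Red3}. I do not anticipate any real obstacle here; the whole argument is essentially unpacking the certificate definition and invoking the greedy extension lemma.
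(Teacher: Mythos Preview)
Your proof is correct and follows exactly the same approach as the paper: you show that $(S,T)$ is a certificate for the $st$-separator minimality of $N(S)\cap N(T)$ and then invoke Lemma~\ref{greedy alg minimal st separator}. In fact, your write-up is more explicit than the paper's, which simply asserts that $S$ and $T$ serve as such a certificate without spelling out the verification of each condition.
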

\begin{proof} The above reduction rule is safe because 
the sets $S$ and $T$ serve as a \textit{certificate} for the $st$-separator minimality of the vertex set $N(S)\cap N(T)$. 
From Lemma \ref{greedy alg minimal st separator}, there exists a minimal $st$-separator in $G$ that contains $N(S)\cap N(T)$. Since $|N(S) \cap N(T)| \geq k$,
we conclude that $G$ contains a minimal $st$-separator of size at least $k$. 
\end{proof}

\begin{lemma}\label{lem:one}
    If there exists $v\in N(S)$ (resp.~$v \in N(T)$),
    such that every path from $v$ to any vertex of $T$ (resp.~$S$)
    intersects $N(S) \setminus \{v\}$ (resp.~$N(T) \setminus \{v\}$),
    or there is no path from $v$ to any vertex of $T$ (resp.~$S$),
    then there is no minimal $ST$-separator which is disjoint from $S \cup T$ and that contains $v$.
\end{lemma}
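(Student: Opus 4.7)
The plan is to argue by contradiction. Suppose $v \in N(S)$ satisfies the hypothesis of the lemma, yet there exists a minimal $ST$-separator $Z$ with $v \in Z$ and $Z \cap (S \cup T) = \emptyset$. By the minimality of $Z$, the set $Z \setminus \{v\}$ fails to be an $ST$-separator, so there is an $ST$-path $P$ in $G - (Z \setminus \{v\})$; since $Z$ itself separates $S$ from $T$, the path $P$ must pass through $v$. Let $t_0 \in T$ be the endpoint of $P$ in $T$ and let $Q$ be the sub-path of $P$ from $v$ to $t_0$. Then $Q$ is a $vT$-path in $G$, and by construction every vertex of $Q$ other than $v$ avoids $Z$.

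I would then invoke the hypothesis on $v$. If no $vT$-path exists in $G$, the existence of $Q$ is already a contradiction. Otherwise every $vT$-path, and in particular $Q$, must contain a vertex of $N(S) \setminus \{v\}$; pick such a vertex $u$ on $Q$. Since $E(S,T) = \emptyset$ forces $N(S) \cap T = \emptyset$, we have $u \neq t_0$, so $u$ lies strictly between $v$ and $t_0$ on $P$; in particular $u \notin Z$. Because $u \in N(S)$, choose any neighbor $s_1 \in S$ of $u$ and form the walk $s_1, u, Q'$, where $Q'$ is the portion of $Q$ from $u$ to $t_0$. This is an $ST$-path in $G$. Every vertex of $Q'$ lies strictly after $v$ on $P$, so $Q'$ does not contain $v$, and combined with the fact that vertices of $Q$ other than $v$ avoid $Z$, the entire new $ST$-path is disjoint from $Z$. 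This contradicts that $Z$ is an $ST$-separator.

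The only subtlety to handle carefully is that the rerouted path truly avoids all of $Z$, including the vertex $v$ itself; this is ensured by choosing $u$ on the portion of $P$ strictly after $v$, so the tail $Q'$ cannot revisit $v$, together with the choice of $P$ in $G - (Z \setminus \{v\})$. The symmetric case $v \in N(T)$ is handled by the same argument after swapping the roles of $S$ and $T$ throughout.
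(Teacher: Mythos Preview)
Your proof is correct and follows essentially the same approach as the paper: assume a minimal $ST$-separator $Z$ contains $v$, use minimality to get an $ST$-path $P$ through $v$ in $G-(Z\setminus\{v\})$, locate on the $v$--$T$ tail of $P$ a vertex $u\in N(S)\setminus\{v\}$ guaranteed by the hypothesis, and reroute through a neighbour of $u$ in $S$ to obtain an $ST$-walk avoiding all of $Z$. Your write-up is in fact slightly more careful than the paper's in justifying that $u\neq t_0$ (via $N(S)\cap T=\emptyset$) and that the tail $Q'$ cannot revisit $v$; the only cosmetic point is that the rerouted object $s_1,u,Q'$ is a priori a walk rather than a simple path, but this is harmless for the contradiction.
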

\begin{proof}
    When $v$ has no path to any vertex of $T$, then such a vertex cannot lie on any $ST$-path and hence, is not a part of any minimal $ST$-separator.

    Suppose now that $v \in N(S)$ and every path from $v$ to any vertex of $T$ intersects
    $N(S) \setminus \{v\}$. The other case when $v \in N(T)$ is symmetric.
     For the sake of contradiction, say there exists a minimal $ST$-separator $Z$ such that $v \in Z$ and $Z \cap (S \cup T) = \emptyset$. 
     This implies that in the graph $G - Z$, there is no path from any vertex in $S$ to any vertex in $T$, 
     but in the graph $G - (Z \setminus \{v\})$, such a path, say $P$, exists.
     Let $P$ be a path from $s'$ to $t'$ in $G-(Z\setminus \{v\})$, where $s' \in S$ and $t' \in T$.

    Since $v$ cannot reach any vertex of $T$ (in particular $t'$) in $G$,
    without traversing another vertex, say $u$, in $N(S)$, 
    consider the $u$ to $t'$ subpath of $P$ (which does not contain $v$).
    Since $u \in N(S)$, let $s'' \in N(u) \cap S$. 
    Since $Z \cap (S \cup T) = \emptyset$,
    there exists a path $P'$ from $s''$ to $t'$ 
    in $G - (Z \setminus \{v\})$ that does not contain $v$ (take the edge $(s'',u)$,
    followed by the $u$ to $t'$ subpath of $P$). This contradicts that $Z$ is an $ST$-separator.
\end{proof}

The following reduction rule ensures that both $N(S)$ and $N(T)$ are minimal $ST$-separators.

\begin{red}\label{Red1}\rm
    If there exists $v\in N(S)$ (resp.~$v \in N(T)$),
    such that every path from $v$ to any vertex of $T$ (resp.~$S$)
    intersects $N(S)$ (resp.~$N(T)$),
    or there is no path from $v$ to any vertex of $T$ (resp.~$S$),
    then update $S:=S \cup \{v\}$ (resp.~$T:=T \cup \{v\}$).
\end{red}

\begin{lemma}
    Reduction Rule \ref{Red1} is safe.
\end{lemma}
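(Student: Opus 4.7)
The plan is to prove safeness by establishing two things: (i) the updated tuple $(G, S \cup \{v\}, T, k, q)$ remains a valid instance in the sense defined above, and (ii) the set of minimal $ST$-separators of size at least $k$ that are disjoint from $S \cup T$ is unchanged under the update. Together these imply that the algorithm, when run on the updated instance, still correctly decides whether the original instance has a solution, and in particular any solution eventually returned is a minimal $st$-separator of $G$ of size at least $k$.

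For (i), I would check the four validity properties in turn. Three of them---$s \in S \cup \{v\}$, $t \in T$, and connectivity of $G[S \cup \{v\}]$---are essentially immediate; the third uses that $v \in N(S)$ has a neighbor in the already-connected set $S$. The disjointness condition $v \notin T$ follows from $E(S,T) = \emptyset$ together with the existence of a neighbor of $v$ in $S$. The delicate property is $E(S \cup \{v\}, T) = \emptyset$: here I would argue that an edge $vt'$ with $t' \in T$ would form a path from $v$ to $T$ avoiding every vertex of $N(S) \setminus \{v\}$, contradicting the hypothesis of the rule (as formalized by Lemma~\ref{lem:one}).

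For (ii), one direction is trivial since $S$ only grows: any minimal $ST$-separator of the updated instance, being disjoint from $(S \cup \{v\}) \cup T$, is automatically disjoint from $S \cup T$, still separates $S$ from $T$, and inherits minimality. For the converse, let $Z$ be a minimal $ST$-separator of the original instance with $|Z| \geq k$ and $Z \cap (S \cup T) = \emptyset$. Invoking Lemma~\ref{lem:one} under the hypothesis of the rule gives $v \notin Z$, so $Z$ is disjoint from $(S \cup \{v\}) \cup T$. To check that $Z$ still separates $S \cup \{v\}$ from $T$, I would observe that $v$ remains connected to its neighbor in $S$ inside $G \setminus Z$ (since $Z \cap S = \emptyset$), so any $(S \cup \{v\})$-$T$ path in $G \setminus Z$ starting at $v$ would extend to an $ST$-path in $G \setminus Z$, contradicting that $Z$ separates $S$ from $T$ in the original. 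Minimality of $Z$ with respect to $(S \cup \{v\}, T)$ is inherited because any $ST$-path in $G \setminus (Z \setminus \{z\})$ is also an $(S \cup \{v\})$-$T$ path.

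The main lever throughout is Lemma~\ref{lem:one}: once it certifies that $v$ cannot lie in any minimal $ST$-separator disjoint from $S \cup T$, the rest reduces to routine connectivity bookkeeping. The one genuinely subtle point is the non-adjacency check $E(S \cup \{v\}, T) = \emptyset$ in the validity argument, since this is precisely where the hypothesis of the rule must be used to rule out a direct edge from $v$ to $T$.
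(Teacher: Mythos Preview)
Your proof is correct and uses the same key idea as the paper---invoke Lemma~\ref{lem:one} to certify that $v \notin Z$ for every minimal $ST$-separator $Z$ disjoint from $S \cup T$, then observe that $v$ stays reachable from $S$ in $G-Z$ via its neighbor in $S$. You are in fact more thorough than the paper's three-line argument: you explicitly verify validity of the updated instance (notably the non-adjacency $E(S \cup \{v\}, T) = \emptyset$, which does follow from the rule's hypothesis as you argue) and you treat both directions of the solution-set equivalence, whereas the paper only spells out the forward direction and leaves validity implicit.
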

\begin{proof}
From Lemma~\ref{lem:one},
no minimal $ST$-separator, that is disjoint from $S\cup T$, contains $v$.
Since $v \in N(S)$ (resp.~$v \in N(T)$),
for any minimal $ST$-separator $Z$,
$v$ is reachable from $S$ in $G-Z$, since $Z \cap S = \emptyset$ (resp.~$Z \cap T = \emptyset$). Thus, $Z$ is also a minimal separator between $S \cup \{v\}$ and $T$.
\end{proof}

\begin{lemma}\label{lem:minimalST}
    When Reduction Rule~\ref{Red1} is no longer applicable,
    $N(S)$ (resp.~$N(T)$) is a minimal $ST$-separator.
\end{lemma}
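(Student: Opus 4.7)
The plan is to prove the claim for $N(S)$; the statement for $N(T)$ follows by a symmetric argument. I would split the proof into an easy ``separation'' step and a slightly more delicate ``minimality'' step. For separation, the structural invariants on $(S,T)$ do all the work: since $G[S]$ is connected, $S\cap T=\emptyset$, and $E(S,T)=\emptyset$, any path from a vertex of $S$ to a vertex of $T$ must leave $S$ through a vertex of $N(S)\setminus T$, so $N(S)$ meets every $ST$-path. (In the degenerate case $N(S)=\emptyset$, the set $S$ is a full connected component of $G$, so no $ST$-path exists and $\emptyset$ is trivially a minimal $ST$-separator.)

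For minimality, I would fix an arbitrary $v\in N(S)$ and exhibit an $ST$-path in $G-(N(S)\setminus\{v\})$. The key input is that Reduction Rule~\ref{Red1} is no longer applicable to $v$, which, read as in Lemma~\ref{lem:one} (i.e.\ ``every path from $v$ to $T$ intersects $N(S)\setminus\{v\}$, or $v$ cannot reach $T$ at all''), must fail at $v$. Negating this statement yields a path $P_v$ from $v$ to some $t'\in T$ whose vertices after $v$ avoid $N(S)$. Because $v\in N(S)$, there exists $s'\in S$ adjacent to $v$, and prepending the edge $s'v$ to $P_v$ gives the required $s'$-$t'$ path in $G-(N(S)\setminus\{v\})$. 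Repeating this for every $v\in N(S)$ then establishes minimality.

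The only real subtlety, and the place I would be most careful about, is the precise reading of Reduction Rule~\ref{Red1}: taken literally, ``every path from $v$ to $T$ intersects $N(S)$'' is vacuously true since $v\in N(S)$, so the intended meaning is ``intersects $N(S)\setminus\{v\}$,'' matching the formulation already made explicit in Lemma~\ref{lem:one}. Once this reading is pinned down, the lemma essentially amounts to unfolding the reduction rule, which has been engineered precisely so that each surviving vertex of $N(S)$ carries with it a private $S$-to-$T$ path that witnesses its indispensability.
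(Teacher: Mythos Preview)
Your proof is correct and follows essentially the same approach as the paper: show that $N(S)$ separates $S$ from $T$ using the invariants $S\cap T=\emptyset$ and $E(S,T)=\emptyset$, then for each $v\in N(S)$ use the inapplicability of Reduction Rule~\ref{Red1} to obtain a $v$-to-$T$ path avoiding $N(S)\setminus\{v\}$ and prepend an $S$-$v$ edge to witness that $v$ is essential (the paper phrases this last step by contradiction, but the construction is identical). Your observation about the literal wording of Reduction Rule~\ref{Red1} is also apt; the paper's own proof confirms that the intended meaning is ``intersects $N(S)\setminus\{v\}$,'' as in Lemma~\ref{lem:one}.
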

\begin{proof}
    First note that $N(S)$ (resp.~$N(T)$) is an $ST$-separator in $G$ which is disjoint from $S \cup T$, since $N(S) \cap T = \emptyset$ because $E(S,T) = \emptyset$. 

    For the sake of contradiction, say $N(S)$ is not a minimal $ST$-separator in $G$. In particular, there exists $v \in N(S)$ such that $N(S) \setminus \{v\}$ is also an $ST$-separator. 
    Since Reduction Rule~\ref{Red1} is no longer applicable,
    there exists a path from $v$ to a vertex of $T$, say $t'$, 
    which has no other vertex of $N(S)$. 
    Such a path together with an edge from $v$ to a vertex of $S$, gives an $ST$-path, which intersects $N(S)$ only at $v$.
\end{proof}

From Lemma~\ref{lem:minimalST}, the safeness of Reduction Rule~\ref{Red2} is immediate.

\begin{red}\label{Red2}\rm
   If $|N(S)|\geq k$ or $|N(T)|\geq k$ then report a yes instance.
\end{red}

\begin{red}\label{Red5}
    If $N(S) \setminus N(T) = \emptyset$ or $N(T) \setminus N(S) = \emptyset$, then report a no instance.
\end{red}

\begin{lemma}
    Reduction Rule~\ref{Red5} is safe.
\end{lemma}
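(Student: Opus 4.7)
The plan is to reduce to the single case $N(S) \setminus N(T) = \emptyset$ by the symmetry of the rule, and then to show that under this hypothesis no minimal $st$-separator of $G$ disjoint from $S \cup T$ has size at least $k$. Two facts inherited from the earlier reduction rules will be central to the argument: Lemma~\ref{lem:minimalST} guarantees that $N(S)$ is a minimal $ST$-separator, and the inapplicability of Reduction Rule~\ref{Red2} gives $|N(S)| < k$.

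The main step I would carry out is to show that $N(S)$ is the \emph{unique} minimal $ST$-separator of $G$ that is disjoint from $S \cup T$. Given any $v \in N(S)$, the hypothesis $N(S) \subseteq N(T)$ furnishes a neighbor $s' \in S$ and a neighbor $t' \in T$ of $v$, producing the length-two $ST$-path $s' - v - t'$ whose only internal vertex is $v$. Since any $ST$-separator $Z$ with $Z \cap (S \cup T) = \emptyset$ must intersect this path, it must contain $v$, and hence $Z \supseteq N(S)$. Because $N(S)$ is itself an $ST$-separator (it isolates $S$ inside $G$), minimality forces any such minimal separator to coincide with $N(S)$, which has size strictly less than $k$.

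Finally, I would bridge from minimal $ST$-separators back to minimal $st$-separators using the validity of the pair $(S,T)$. Because $s \in S$, $t \in T$, $E(S,T) = \emptyset$, and both $G[S]$ and $G[T]$ are connected, any minimal $st$-separator $Z$ with $Z \cap (S \cup T) = \emptyset$ is automatically a minimal $ST$-separator: it separates $S$ from $T$, and dropping any $v \in Z$ reopens an $st$-path which is necessarily an $ST$-path. Combining this with the uniqueness established above yields that every minimal $st$-separator of $G$ disjoint from $S \cup T$ has size $|N(S)| < k$, so reporting a no instance is correct. I do not anticipate a serious obstacle: the only subtle point is the last equivalence, which is purely combinatorial once one carefully uses all four validity conditions on $(S,T)$.
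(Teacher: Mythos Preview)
Your proof is correct and follows essentially the same route as the paper: both arguments observe that when $N(S)\setminus N(T)=\emptyset$ every vertex of $N(S)$ lies on a length-two $ST$-path, so the unique minimal $ST$-separator disjoint from $S\cup T$ is $N(S)=N(S)\cap N(T)$, whose size is already bounded below $k$ by an earlier rule (you invoke Reduction Rule~\ref{Red2}, the paper invokes Reduction Rule~\ref{Red3}; these coincide here). Your final paragraph bridging back to minimal $st$-separators is sound but unnecessary, since the branching algorithm's correctness is phrased in terms of minimal $ST$-separators disjoint from $S\cup T$, and that is what ``report a no instance'' refers to in this branch.
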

\begin{proof}
    Suppose $N(S) \setminus N(T) = \emptyset$. The other case is symmetric.
    Then any $ST$-path uses only the vertices of $S \cup T \cup (N(S) \cap N(T))$. In this case there is a unique $ST$-separator in $G$ which is disjoint from $S \cup T$. This separator is $N(S) \cap N(T)$. Since Reduction Rule~\ref{Red3} is no longer applicable, $|N(S) \cap N(T)| <k$. Thus $G$ has no $ST$-separator of size at least $k$.
\end{proof}

\begin{bred}\label{bred:one}
    If $|S| \leq |T|$ (resp.~$|T| < |S|$) and $N(S) \setminus N(T) \neq \emptyset$ (resp.~$N(T) \setminus N(S) \neq \emptyset$), then for each vertex $x \in N(S) \setminus N(T)$ (resp.~$x \in N(T) \setminus N(S)$), we recursively solve the instance $(G, S \cup \{x\}, T, k,q)$ (resp.~$(G, S, T \cup \{x\}, k,q)$).
\end{bred}

First observe that the new instances created in this branching rule are all valid, that is, the sets $S \cup \{x\}$, $T$ (respectively $S$, $T \cup \{x\}$) satisfy the desired properties: the two sets $S \cup \{x\}$ and $T$ (resp.~$S$ and $T \cup \{x\}$) are disjoint, $G[S \cup \{x\}]$ (resp.~$G[T \cup \{x\}]$) is connected and  $E_G(S \cup \{x\},T) = \emptyset$ (resp.~$E_G(S ,T \cup \{x\})$) because $x \in N(S) \setminus N(T)$.

\begin{lemma}
    Branching Rule \ref{bred:one} is exhaustive, that is, $G$ has a minimal $ST$-separator of size at least $k$ which is disjoint from $S \cup T$ if and only if there exists $x \in N(S) \setminus N(T)$ (resp.~$x \in N(T) \setminus N(S)$) such that $G$ has a minimal separator of size at least $k$ between $S \cup \{x\}$ and $T$ (resp.~$S$ and $T \cup \{x\}$), which is disjoint from $S \cup T \cup \{x\}$.
\end{lemma}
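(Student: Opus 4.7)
The plan is to show both implications by exhibiting the \emph{same} set as a separator of both kinds. Assume without loss of generality $|S|\leq|T|$; the case $|T|<|S|$ is symmetric.

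\textbf{Forward direction.} Let $Z$ be a minimal $ST$-separator disjoint from $S\cup T$ with $|Z|\geq k$. The aim is to find $x\in N(S)\setminus N(T)$ with $x\notin Z$ such that $Z$ is also a minimal $(S\cup\{x\},T)$-separator. Two observations drive the argument. First, every vertex of $N(S)\cap N(T)$ lies on an $ST$-path of length two, hence belongs to every $ST$-separator disjoint from $S\cup T$; so $N(S)\cap N(T)\subseteq Z$. Second, since Reduction Rule~\ref{Red2} no longer applies, $|N(S)|<k\leq|Z|$, and we may pick $z^{\ast}\in Z\setminus N(S)$. By minimality of $Z$, the graph $G-(Z\setminus\{z^{\ast}\})$ contains some $ST$-path $P$; let $x$ be the first vertex of $P$ outside $S$. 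Then $x\in N(S)$ (it is adjacent to the preceding vertex of $P$, which lies in $S$); moreover $x\neq z^{\ast}$ since $z^{\ast}\notin N(S)$, so $x\notin Z$; $x\notin T$ because $E(S,T)=\emptyset$; and $x\notin N(T)$, for otherwise $x\in N(S)\cap N(T)\subseteq Z$. Hence $x\in N(S)\setminus N(T)$ and $x\notin Z$. It remains to verify that $Z$ is a minimal $(S\cup\{x\},T)$-separator disjoint from $S\cup T\cup\{x\}$: disjointness is clear, and $Z$ separates $S\cup\{x\}$ from $T$ because the $S$-component of $G-Z$ contains $x$ but no vertex of $T$; for minimality, for each $z\in Z$ the $ST$-path in $G-(Z\setminus\{z\})$ witnessing non-separation is automatically an $(S\cup\{x\},T)$-path.

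\textbf{Backward direction.} Suppose $Z'$ is a minimal $(S\cup\{x\},T)$-separator disjoint from $S\cup T\cup\{x\}$ with $|Z'|\geq k$. Since $S\subseteq S\cup\{x\}$, $Z'$ is an $ST$-separator, and it is disjoint from $S\cup T$. To prove minimality as an $ST$-separator, fix $z\in Z'$ and pick any path $P$ from $S\cup\{x\}$ to $T$ in $G-(Z'\setminus\{z\})$. If $P$ starts in $S$, it is already an $ST$-path. Otherwise $P$ starts at $x$; choose $s'\in S\cap N(x)$, which is nonempty because $x\in N(S)$ and satisfies $s'\notin Z'$. Prepending the edge $s'x$ to $P$ yields a walk from $S$ to $T$ avoiding $Z'\setminus\{z\}$, which contains an $ST$-path. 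Hence $Z'\setminus\{z\}$ is not an $ST$-separator for any $z\in Z'$, so $Z'$ is itself a minimal $ST$-separator of size at least $k$ disjoint from $S\cup T$.

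\textbf{Main obstacle.} The delicate step is producing the vertex $x$ in the forward direction: one needs $x$ to lie in $N(S)\setminus N(T)$ \emph{and} outside $Z$ simultaneously. The combination of the inclusion $N(S)\cap N(T)\subseteq Z$ with the strict inequality $|N(S)|<|Z|$ supplied by Reduction Rule~\ref{Red2} is exactly what is required: it guarantees a witness $z^{\ast}\in Z\setminus N(S)$, whose minimality path out of $S$ must first cross $N(S)$ at a vertex distinct from $z^{\ast}$, and this crossing vertex cannot lie in $N(T)$ without contradicting its absence from $Z$. Everything else reduces to straightforward verification.
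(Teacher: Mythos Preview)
Your proof is correct and follows essentially the same approach as the paper. The only noteworthy difference is in the forward direction: the paper obtains the vertex $x\in N(S)\setminus(N(T)\cup Z)$ by observing that $N(S)$ is itself a minimal $ST$-separator (Lemma~\ref{lem:minimalST}, valid because Reduction Rule~\ref{Red1} has been exhausted), so $N(S)\not\subseteq Z$ by minimality of $Z$ and the size gap, and any vertex in $N(S)\setminus Z$ automatically avoids $N(T)$; you instead pick $z^\ast\in Z\setminus N(S)$ and extract $x$ from the minimality path for $z^\ast$. Both arguments are valid, and yours has the minor advantage of not invoking Lemma~\ref{lem:minimalST}. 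The backward direction is identical in spirit, with your version spelling out the path-prepending step that the paper leaves implicit.
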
 

\begin{proof} 
Assume that $|S| \leq |T|$. The other case is symmetric.
Since Reduction Rule~\ref{Red2} is no longer applicable, $|N(S)| < k$.
Since Reduction Rule~\ref{Red1} is no longer applicable and because of Lemma~\ref{lem:minimalST},
$N(S)$ is a minimal $ST$-separator.
Also because $N(S) \cap N(T)$ is contained in every minimal $ST$-separator in $G$ (from the proof of Lemma~\ref{lem:safe-red3}),
 for any minimal $ST$-separator in $G$ of size at least $k$, say $Z$, which is disjoint from $S \cup T$, there exists  a vertex $x \in N(S) \setminus N(T)$ such that $x \not \in Z$. Since $x \in N(S)$, $x$ remains reachable from $S$ in $G-Z$. In particular, $Z$ is also a minimal separator between $S \cup \{x\}$ and $T$.

In the other direction say the instance $(G,S \cup \{x\},T,k,q)$ reports a minimal separator, say $Z$, between $S \cup \{x\}$ and $T$ of size at least $k$. Since $S \cup \{x\}$ and $T$ satisfy the desired properties of a valid instance, $Z$ is also a minimal $ST$-separator in $G$. 
\end{proof}

\begin{proof}[Proof of Theorem~\ref{thm:st-unbreak}]
Initialize an instance $I = (G, S, T, k, q)$ where $S = \{s\}$ and $T=\{t\}$. Note that $\mu(I) = q -1$.
Apply Reduction Rules~\ref{Red4}-\ref{Red5} exhaustively in-order. 
Without loss of generality, say $|S| \leq |T|$, the other case is analogous.
Since none of the reduction rules are applicable, $1 \leq |N(S) \setminus N(T)| < k$.
Now apply Branching Rule~\ref{bred:one}.
After every application of the Branching Rule~\ref{bred:one},
 $\mu(I')$, where $I'$ is the new instance, strictly decreases if $|S| \neq |T|$.
If $|S| = |T|$, then after every two applications of the Branching Rule~\ref{bred:one}, the measure $\mu$ decreases by $1$.
From Reduction Rule~\ref{Red4}, if $\mu$ of an instance is at most $0$, then we stop and report a yes instance.
The correctness of this algorithm follows from the safeness of the Reduction Rules~\ref{Red4}-\ref{Red5} and Branching Rule~\ref{bred:one}.
We now argue about the running time.

Note that all reduction rules can be applied in polynomial time. Also all reduction rules, except Reduction Rule~\ref{Red1}, is applied only once throughout the algorithm. Reduction Rule~\ref{Red1} is applied at most $2q$ times (or until both $S$ and $T$ grow to a size of $q$ each). Thus only polynomial time is spent on all applications of all reduction rules.
The branching rule branches in at most $k-1$ instances and has depth bounded by $2q$. Therefore the overall running time is $(k-1)^{2q} \cdot n^{\mathcal{O}(1)}$.
\end{proof}

\section{{\sc Maximum Minimal  OCT} parameterized by solution size}\label{minimal OCT section}

 In this section, we prove Theorem~\ref{thm:oct}.

\thmoct*

Throughout this section, we call a set of vertices of $G$ whose deletion results in a bipartite graph, an {\em oct} of $G$.
To prove Theorem~\ref{thm:oct}, we first show that the problem is CMSO definable (Lemma~\ref{lem:oct-cmso}). 
Using Proposition~\ref{cmso}, one can reduce to solving  this problem on $(q,2k)$-unbreakable graphs. 
On $(q,2k)$-unbreakable graphs, 
we then list and prove two sufficient conditions (Lemmas~\ref{lot of small odd cycles} and~\ref{long induced odd cycle}) 
which always imply a yes instance 
(in fact the first one implies a yes instance even when the input graph is not $(q,2k)$-unbreakable). We also make an observation about irrelevant vertices that can be deleted without changing the solution of the instance (Observation~\ref{delete unnecessary vertex}).
Even though checking whether any one of these sufficient conditions hold or finding these irrelevant vertices, may not be efficient, 
nonetheless we design an FPT algorithm that 
correctly concludes that at least one of the sufficient conditions is met, or outputs an irrelevant vertex, whenever the number of vertices in the graph is strictly more than a number with is a function of $q$ and $k$ (Theorem~\ref{main vertex OCT}). 
In the case when an irrelevant vertex is outputted, deleting them reduces the size of the graph but the resulting graph may not be $(q,2k)$-unbreakable. In this case, we start from the beginning and solve the problem from scratch (on general graphs).
If none of the above hold, then 
 the number of vertices in the graph is bounded, and we can solve the problem using brute-force.

\begin{lemma}\label{lem:oct-cmso}
   \textsc{Maximum Minimal OCT} is CMSO-definable by a formula of length $\mathcal{O}(k)$.
\end{lemma}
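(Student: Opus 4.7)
The plan is to follow the same template as Lemma~\ref{minimal st separator CMSO}. By definition, $(G, k)$ is a yes instance of \textsc{MaxMin OCT} if and only if there exists $Z \subseteq V(G)$ with $|Z| \geq k$ such that $G - Z$ is bipartite and, for every $v \in Z$, the graph $G - (Z \setminus \{v\})$ is not bipartite (equivalently, contains an odd cycle). Writing this verbatim in CMSO would make the formula length scale with $|Z|$, which may be $\Theta(n)$, so I would weaken the condition exactly as in the $st$-separator case: existentially quantify over a set $Z$ of arbitrary size together with $k$ distinct witness vertices $v_1, \ldots, v_k \in Z$, and require only that $G - Z$ is bipartite and that $G - (Z \setminus \{v_i\})$ is not bipartite for each $i \in [k]$.

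The equivalence between this relaxed condition and the original definition is immediate in both directions. Given such a set $Z$, one iteratively removes any $u \in Z \setminus \{v_1, \ldots, v_k\}$ for which $G - (Z \setminus \{u\})$ remains bipartite; this process terminates in a minimal oct of $G$ that still contains every $v_i$ and hence has size at least $k$. Conversely, any minimal oct of size at least $k$ directly supplies such a $Z$ by taking any $k$ of its elements as the $v_i$'s.

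For the CMSO encoding I would introduce a sub-formula $\mathbf{bip}(U)$ asserting that $G[U]$ is bipartite via a two-coloring,
\[
\mathbf{bip}(U) \;:=\; \exists A \subseteq U \; \forall x, y \in U \big( \mathrm{adj}(x,y) \to ((x \in A) \leftrightarrow (y \notin A)) \big),
\]
whose length does not depend on $k$. The final sentence then reads
\begin{align*}
\psi \;=\; \exists Z \subseteq V(G) \Bigg(\; & \exists v_1, \ldots, v_k \in Z \left(\bigwedge_{1 \leq i < j \leq k} v_i \neq v_j\right) \\
& \;\land\; \mathbf{bip}\big(V(G) \setminus Z\big) \\
& \;\land\; \bigwedge_{i=1}^{k} \neg\, \mathbf{bip}\big(V(G) \setminus (Z \setminus \{v_i\})\big) \;\Bigg),
\end{align*}
whose length is linear in $k$ under the same accounting convention used in Lemma~\ref{minimal st separator CMSO}. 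The only conceptual point, and the closest thing to an obstacle, is recognizing that ``contains an odd cycle'' is best expressed indirectly as ``is not bipartite''; this keeps the sentence in plain MSO over graphs and means the counting modality of CMSO is never actually invoked.
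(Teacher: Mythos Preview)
Your proposal is correct and follows essentially the same approach as the paper: relax the minimality condition to require only $k$ distinguished witnesses $v_1,\ldots,v_k$ inside an oct $Z$, argue that such a $Z$ contains a minimal oct of size at least $k$, and encode bipartiteness via a two-coloring subformula. The paper's proof is virtually identical, differing only cosmetically in that it writes the bipartition with two explicit color classes $X,Y$ rather than a single set $A$.
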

\begin{proof}
Let $(G,k)$ be an instance of \textsc{Maximum Minimal OCT}. 
Then $(G,k)$ is a yes instance if there exists $Z \subseteq V(G)$ of size at least $k$ such that $G-Z$ is bipartite and 
for each $v \in Z$, $G- (Z \setminus \{v\})$ is not bipartite.

Alternately, let $Z \subseteq V(G)$ such that $Z$ contains $k$ distinct vertices $v_1, \ldots, v_k$, such that $G-Z$ is bipartite and
for each $i \in [k]$, $G- (Z \setminus \{v\})$ is not bipartite.
Observe that if such a set $Z$ exists, it may not be a minimal oct of $G$, but it definitely contains a minimal oct of size at least $k$.
In fact, $(G,k)$ is a yes instance if and only if such a set $Z$ exists.
We phrase this description of $Z$ as the CMSO formula $\psi$ as defined below.

\begin{align*}
  \varphi \equiv \exists Z \subseteq V(G) \, &\Bigg( 
\exists v_1, v_2, \dots, v_k \in Z \ \bigg( \bigwedge\limits_{1\leq i<j\leq k} v_{i} \neq v_{j} \bigg) \\
  &\wedge \textbf{bipartite}(V(G) \setminus Z) \\
  &\wedge \left( \bigwedge_{i=1}^{k} \neg \textbf{bipartite}(V(G) \setminus (Z \setminus \{v_i\}) \right) \Bigg)  
\end{align*}

 where \textbf{bipartite}($W$) is a CMSO sentence given below, which checks whether the graph induced by the vertices in $W$ is bipartite.

\begin{align*}
\textbf{bipartite}(W) \equiv &\exists X \subseteq W , \exists Y \subseteq W  \\
&\Bigg ( \left( X \cap Y = \emptyset \right ) \wedge \left (X \cup Y = W \right ) \\
&\wedge \forall u, v \in W \, \left (E(u, v) \implies (u \in X \iff v \in Y) \right ) \Bigg ).
\end{align*} 

 It is clear that the size of the above formula $\varphi$ depends linearly on $k$. 
\end{proof}

\begin{proof}[Proof of Observation~\ref{delete unnecessary vertex}]
For the forward direction,
let $Z$ be a minimal oct of $G$ of size at least $k$. 
Then
$G - Z$ is bipartite.
Additionally, for any $z \in Z$, 
the graph $G - (Z \setminus \{z\})$ contains an odd cycle through $z$, implying the existence of an induced odd cycle $C_z$ containing $z$.
As the cycle is an induced odd cycle, we know that the vertex $v$ does not lie on this cycle.
We will show that $Z$ is also a minimal oct in $G-v$.
Clearly, $G-(Z \cup \{v\})$ is a bipartite graph.
Therefore, $Z$ is an oct of $G-v$.
For any vertex $z\in Z$, 
there is an induced odd cycle $C_{z}$ in $G-(Z \cup \{v\})$.
Therefore, $Z$ is also a minimal odd cycle traversal of $G-v$.

 For the other direction, let $Z$ be a minimal oct of $G - v$ with $|Z| \geq k$. 
Then, $G - (Z \cup \{v\})$ is bipartite.
For each $z \in Z$, the graph $G - ((Z \setminus \{z\}) \cup \{v\})$ contains an induced odd cycle $C_z$.
Since $v$ is not part of any induced odd cycle, $G - Z$ is bipartite, and $Z$ remains a minimal odd cycle transversal in $G$. 
\end{proof}

Because of Lemma~\ref{lem:oct-cmso} we can invoke Proposition~\ref{cmso}.
We invoke Proposition~\ref{cmso} with $c=2k$. Let $q$ be the $s$ from this proposition that corresponds to this choice of $c$.
We conclude that to prove Theorem~\ref{thm:oct} it is enough to prove Theorem~\ref{thm:oct-unbreak}.

\thmoctunbreak*

We prove Theorem~\ref{thm:oct-unbreak} in Section~\ref{sec:oct-unbreak}.
Next we define a certificate for minimality of oct for a vertex set $V'$. The existence of such certificates guarantees the existence of a minimal oct which contains $V'$.

\begin{definition}
    Given a graph $G$ and a set of vertices $V' \subseteq V(G)$, 
    we say that an induced subgraph $G'$ of $G$ is a {\it certificate} for the oct-minimality of $V'$,
    if $G'$ is bipartite and
    for every $v\in V'$, 
    $G[V(G') \cup \{v\}]$ contains an odd cycle.
\end{definition}

\begin{lemma}\label{greedy alg}
    Given a graph $G$ and a set of vertices $V'$, if there is a {\it certificate} for the oct-minimality of $V'$ then 
    there exists 
    a minimal oct of $G$ that contains all the vertices in $V'$.
\end{lemma}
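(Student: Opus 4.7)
The plan is to mirror the proof strategy used for Lemma \ref{greedy alg minimal st separator}, but run the greedy procedure in the opposite direction: start with a large oct and shrink it down to a minimal one while preserving $V'$. Concretely, given a certificate $G'$ for the oct-minimality of $V'$, I would initialize $Z := V(G) \setminus V(G')$. Since for every $v \in V'$ the graph $G[V(G') \cup \{v\}]$ contains an odd cycle (hence is non-bipartite), we must have $V(G') \cap V' = \emptyset$, and so $V' \subseteq Z$. Moreover $G - Z = G'$ is bipartite, so $Z$ is an oct of $G$ containing $V'$.

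Next I would fix an arbitrary ordering of the vertices in $Z \setminus V'$, and process them one by one: for each such $u$, if $G - (Z \setminus \{u\})$ is bipartite, update $Z := Z \setminus \{u\}$; otherwise leave $Z$ unchanged. Output the final $Z$. Throughout this procedure, $Z$ only shrinks, so $V(G) \setminus Z$ only grows, and since we only delete vertices from $Z$ that keep the complement bipartite, $G - Z$ remains bipartite at every step. In particular $V(G') \subseteq V(G) \setminus Z$ is maintained, which will be crucial.

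The verification of minimality splits into two cases. For $u$ that survives in $Z \setminus V'$ at the end: at the moment $u$ was processed, $G - (Z \setminus \{u\})$ was non-bipartite, so it contained an induced odd cycle $C_u$. Because $Z$ only shrinks afterwards, $V(G) \setminus (Z^{\text{final}} \setminus \{u\}) \supseteq V(G) \setminus (Z^{\text{at that time}} \setminus \{u\})$, and the induced cycle $C_u$ persists in $G - (Z^{\text{final}} \setminus \{u\})$, witnessing that removing $u$ from the final $Z$ destroys bipartiteness. For $v \in V'$, the certificate $G'$ supplies an odd cycle in $G[V(G') \cup \{v\}]$; since $V(G') \subseteq V(G) \setminus Z^{\text{final}}$ and $v \in Z^{\text{final}}$, this odd cycle lives inside $G[V(G) \setminus (Z^{\text{final}} \setminus \{v\})]$, so removing $v$ again destroys bipartiteness. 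Combined with the fact that $G - Z^{\text{final}}$ is bipartite, this shows $Z^{\text{final}}$ is a minimal oct of $G$ containing $V'$.

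There is no significant obstacle here; the only point that requires care is the invariant $V(G') \cap Z = \emptyset$ during the entire shrinking process, which ensures that the certificate's odd cycles for vertices $v \in V'$ remain intact in the final graph. Once that invariant is observed, the argument is completely analogous to the proof of Lemma \ref{greedy alg minimal st separator}, with ``bipartite'' playing the role of ``separated'' and ``non-bipartite'' the role of ``connected.''
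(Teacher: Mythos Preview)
Your proof is correct. It is essentially the dual of the paper's argument: the paper initializes $Z=\emptyset$ and grows the bipartite part $G'$ one vertex at a time (each processed vertex either enlarges $G'$ or is placed into $Z$), whereas you initialize $Z=V(G)\setminus V(G')$ and shrink $Z$ one vertex at a time. Since at every step $V(G)\setminus Z$ in your procedure coincides with the current $G'$ in the paper's procedure (under the same ordering), the two algorithms make identical decisions and output the same set; your write-up simply spells out the minimality verification more explicitly than the paper does.
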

\begin{proof} Let $G'$ be a certificate for minimality of $V'$.
First observe that $V' \cap V(G') =\emptyset$ because $G'$ is bipartite, but $G[V(G') \cup \{v\}]$, for any $v \in V'$, contains an odd cycle. 
We now construct a minimal oct of $G$, say $Z$,
iteratively as follows. 
Initialize $Z = \emptyset$. 
Fix an arbitrary ordering of the vertices in $V(G) \setminus V(G')$ (note that $V' \subseteq V(G) \setminus V(G')$). 
Traverse the vertices of $V(G) \setminus V(G')$ in this order.
For any vertex $v \in V(G) \setminus V(G')$ in the chosen order:
\begin{itemize}
    \item if $G[V(G') \cup \{v\}]$ is bipartite, update $G' := G[V(G') \cup \{v\}]$, that is add $v$ to $G'$, otherwise
    \item $G[V(G') \cup \{v\}]$ contains an odd cycle, in which case add $v$ to the set $Z$.
\end{itemize}

\noindent When all the vertices in $V(G) \setminus V(G')$ have been processed as stated above, then the set $Z$ is a minimal oct of $G$. Moreover, one can observe that if we start with $G'$ which a certificate for minimality of $V'$ then $V'\subseteq Z$. 
\end{proof}

\subsection{ \textsc{Maximum Minimal OCT} on $(q,2k)$-unbreakable graphs}\label{sec:oct-unbreak}

The goal of this section is to prove Theorem~\ref{thm:oct-unbreak}. 
Let $(G,k)$ be an instance of {\sc MaxMin OCT}.
A vertex $v \in V(G)$ is called {\em irrelevant} if $(G,k)$ is equivalent to $(G-v,k)$.
To prove Theorem~\ref{thm:oct-unbreak} it is enough to prove Theorem~\ref{main vertex OCT}.

\begin{theorem}\label{main vertex OCT}
    For positive integers $q,k \geq 1$,
    given as input a graph $G$ which is $(q,2k)$-unbreakable on at least $(2q+2)^{2}(2q+2)! (k-1)^{2q+2} +1$ vertices, 
    there exists an algorithm that runs in time $ (qk)^{\mathcal{O}(q)} \cdot n^{\mathcal{O}(1)}$, 
    and 
    either returns a minimal oct of $G$ of size at least $k$ or,
    outputs an irrelevant vertex $v$.
\end{theorem}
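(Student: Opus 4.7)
The plan is to iterate over the vertices $x \in V(G)$ and, for each of them, run a subroutine that either returns an induced odd cycle of $G$ containing $x$, or certifies that no induced odd cycle passes through $x$. In the latter case, Observation~\ref{delete unnecessary vertex} tells us that $x$ is irrelevant and we return it. If the subroutine returns a cycle of length at least $2q+2$, Lemma~\ref{long induced odd cycle} yields a minimal oct of size at least $k$ directly, which we output. Otherwise we obtain a short induced odd cycle $C_x$ of length at most $2q+1$ through $x$, record it in a family $\mathcal{C}$, and move on to the next vertex.

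The heart of the algorithm is the design of the per-vertex subroutine, and this is where the $(q,2k)$-unbreakability of $G$ is used. My plan is to mirror the branching behind Theorem~\ref{thm:st-unbreak}: we guess the two neighbors $a,b$ of $x$ on the putative induced cycle (at most $n^{2}$ choices) and look for an induced $ab$-path of odd length in the graph $G - (N(x) \setminus \{a,b\})$, by growing two connected, disjoint, non-adjacent sets $A \ni a$ and $B \ni b$ while tracking the parity of the portion of the path committed inside each side. A reduction rule analogous to Reduction Rule~\ref{Red1} keeps $N(A)$ and $N(B)$ minimal $AB$-separators, and a branching rule analogous to Branching Rule~\ref{bred:one} picks a vertex in $N(A) \setminus N(B)$ (or $N(B) \setminus N(A)$) to grow the smaller side. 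Once both $A$ and $B$ reach size $q$, the $(q,2k)$-unbreakability of $G$ rules out any $AB$-separator of size less than $2k$, which suffices to extract an induced $ab$-path of the desired parity and hence an induced odd cycle through $x$; if that cycle has length more than $2q+1$ we invoke Lemma~\ref{long induced odd cycle} and stop. The branching depth is bounded by $2q$ and the branching factor by $\mathcal{O}(k)$, so this subroutine runs in $(qk)^{\mathcal{O}(q)} \cdot n^{\mathcal{O}(1)}$ time, matching the bound in the theorem.

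Suppose the outer loop processes every vertex of $G$ without terminating. Then $\mathcal{C}$ is a family of induced odd cycles of length at most $2q+1$ whose union covers $V(G)$, so $|\mathcal{C}| \geq n/(2q+1)$. Plugging in $n \geq (2q+2)^{2}(2q+2)!(k-1)^{2q+2} + 1$ yields $|\mathcal{C}| > (2q+2)(2q+2)!(k-1)^{2q+2}$. Bucketing the cycles by length (there are at most $q$ odd lengths in $\{3,5,\dots,2q+1\}$), pigeonhole forces some length $d$ to contribute more than $(2q+2)!(k-1)^{2q+2} \geq d!\,(k-1)^{d}$ cycles. The Sunflower Lemma (Theorem~\ref{thm:sunflower}) then produces, in polynomial time, a sunflower with $k$ petals among these cycle sets, and Lemma~\ref{lot of small odd cycles} converts this sunflower into a minimal oct of $G$ of size at least $k$, which we output.

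The main obstacle I anticipate is the correctness and running-time analysis of the per-vertex subroutine. Unlike in the plain $st$-separator case, we must simultaneously maintain parity of the $ab$-path being built, its induced-ness with respect to $N(x) \setminus \{a,b\}$, and the unbreakability-driven growth of $A$ and $B$; these three constraints interact non-trivially with the branching measure, and in particular the parity bookkeeping means that merely reaching two large sides may require additional care to guarantee that a path of the correct parity actually exists. Once this subroutine is in place, the rest of the proof reduces to the bookkeeping above, combining the outer loop over vertices, the sunflower-based counting, and a final invocation of Lemma~\ref{lot of small odd cycles} or Lemma~\ref{long induced odd cycle}.
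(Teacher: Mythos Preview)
Your outer loop and the final counting argument are fine and essentially match the paper. The gap is in the per-vertex subroutine.

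You are trying to adapt the $st$-separator branching to \emph{find an induced odd cycle through $x$}, but that branching is designed to find large minimal separators, not induced paths. Two concrete failures:
\begin{itemize}
\item The branching factor bound breaks. In the $st$-separator algorithm, $|N(S)|<k$ holds because ``$|N(S)|\geq k$'' is itself a yes-certificate (Reduction Rule~\ref{Red2}). Here there is no analogous stopping rule: a large minimal $AB$-separator in $G-(N(x)\setminus\{a,b\})$ tells you nothing about whether an induced odd cycle through $x$ exists. So you cannot assume $|N(A)|<k$, and the claimed $\mathcal{O}(k)$ branching factor is unjustified.
\item The termination step is false. Once $|A|,|B|\geq q$, unbreakability gives you a large minimum $AB$-cut, hence many vertex-disjoint $AB$-paths; it does \emph{not} give you an induced $ab$-path, let alone one of prescribed parity. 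The sets $A,B$ you grow are connected blobs, not prefixes of an induced path, and ``tracking parity inside each side'' has no meaning for a blob. Indeed, deciding whether an induced $ab$-path of odd length exists is exactly the NP-hard problem of Lemma~\ref{existence of odd cycle}, so a two-outcome subroutine (``cycle through $x$'' vs.\ ``none exists'') cannot work without additional escape hatches.
\end{itemize}

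The paper's subroutine (Lemma~\ref{lem:sumup}) avoids this entirely and, notably, does not use unbreakability at all. It repeatedly finds an \emph{arbitrary} induced odd cycle $F$ in the current graph (polynomial time: test bipartiteness, shortcut an odd cycle). If $F$ contains $x$ or is long, stop. Otherwise $F$ is short and misses $x$; branch over the $2^{|F|}\leq 2^{2q+1}$ possible intersections $F\cap C_x$ with the hypothetical cycle $C_x$, delete $F\setminus C_x$, and recurse. Each iteration either halts or adds a new short cycle to a growing family $\mathcal{F}$; after $(2q+2)(2q+2)!(k-1)^{2q+2}$ iterations, $\mathcal{F}$ itself triggers Lemma~\ref{lot of small odd cycles}. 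Thus the subroutine has \emph{four} outcomes (cycle through $x$; long cycle; large family; no cycle through $x$), and the two extra escape hatches are precisely what lets it sidestep the NP-hardness. Unbreakability enters only through Lemma~\ref{long induced odd cycle}, when a long induced odd cycle is returned.
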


To see the proof of Theorem~\ref{thm:oct-unbreak} assuming Theorem~\ref{main vertex OCT},
observe that if the number of vertices is at most $(2q+2)^{2}(2q+2)! (k-1)^{2q+2}$,
then the problem can be solved using brute-force.
Otherwise, the algorithm of Theorem~\ref{main vertex OCT} either reports a yes instance, or finds an irrelevant vertex $v$, in which case, delete $v$ from the graph and solve the problem on $G-v$ (which is not necessarily $(q,2k)$-unbreakable).
The rest of the section is devoted to the proof of Theorem~\ref{main vertex OCT}.

\noindent{\bf Irrelevant vertices.}
Recall Observation \ref{delete unnecessary vertex} from Section~\ref{sec:intro}.

\obsirrelevant*

Note that we cannot explicitly design a (polynomial-time) reduction rule based on the above observation,  because determining whether there is an induced odd cycle containing a given vertex is NP-complete (see Lemma~\ref{existence of odd cycle}). 

\vspace{9pt}

\noindent{\bf Sufficient condition 1 [Long induced odd cycle in $G$].}

\begin{lemma}\label{long induced odd cycle}
    For any positive integers $q,k$,
    if $G$ is $(q,2k)$-unbreakable and 
    there exists an induced odd cycle in $G$ of length at least $2q+2$, 
    then $G$ has a minimal oct of size at least $k$.
\end{lemma}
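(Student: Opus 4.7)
The plan is to follow the branching template of Theorem~\ref{thm:st-unbreak} with initial sets carefully extracted from the long induced odd cycle. Write $C = v_1 v_2 \cdots v_\ell v_1$ with $\ell \geq 2q+2$ odd. First I would pick an integer $m$ with $q + 1 \leq m \leq \ell - q - 1$, which is possible because $\ell \geq 2q + 2$. Define $S_0 := \{v_2, \ldots, v_m\}$ and $T_0 := \{v_{m+2}, \ldots, v_\ell\}$, and single out the two ``split'' vertices $u := v_1$ and $w := v_{m+1}$ lying in $V(C) \setminus (S_0 \cup T_0)$. Since $C$ is induced, one checks that $G[S_0]$ and $G[T_0]$ are induced subpaths of $C$ (hence connected and bipartite), that $S_0 \cap T_0 = \emptyset$, and that $E(S_0, T_0) = \emptyset$: any edge between $S_0$ and $T_0$ within $V(C)$ would have to be a cycle edge passing through $u$ or $w$, and there are no chords. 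Consequently $G[S_0 \cup T_0]$ is a bipartite disjoint union of two paths and $|S_0|, |T_0| \geq q$.

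Next, I would feed the valid instance $(G, S_0, T_0, k, q)$ into the branching algorithm of Theorem~\ref{thm:st-unbreak}. Since $\min(|S_0|, |T_0|) \geq q$, the measure $\mu(I) \leq 0$ at the outset, so Reduction Rule~\ref{Red4} and Lemma~\ref{lem:neg-measure} certify that every minimal $S_0 T_0$-separator of $G$ disjoint from $S_0 \cup T_0$ has size at least $k$; in fact, the standard reachability separation combined with $(q, 2k)$-unbreakability forces any such separator to have size at least $2k+1$. Moreover, the length-$2$ paths $v_2$-$u$-$v_\ell$ and $v_m$-$w$-$v_{m+2}$ inside $C$ force both $u$ and $w$ to belong to every such separator.

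To turn a large minimal separator into a large minimal oct, I would mirror the growth mechanism of Branching Rule~\ref{bred:one}, but tailored to additionally maintain the invariant that $G[S \cup T]$ stays bipartite. Whenever a boundary vertex $v \in N(S) \cup N(T)$ can be absorbed into $S$ or $T$ without destroying bipartiteness of $G[S \cup T]$, the branching absorbs it; otherwise $v$ already creates an odd cycle when added to $G[S \cup T]$, and $v$ is locked into a forced-bad set $V'$. At termination, $G' := G[S \cup T]$ is a bipartite induced subgraph and each vertex of $V'$ violates bipartiteness when added to $G'$, which is exactly the certificate condition of Lemma~\ref{greedy alg}. The greedy procedure of that lemma then produces a minimal oct of $G$ containing $V'$.

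The main obstacle I anticipate is arguing $|V'| \geq k$ at termination. Although $(q, 2k)$-unbreakability immediately yields $|V(G) \setminus (S \cup T)| \geq 2k + 1$, not every vertex of this complement is automatically ``bad'' in the sense required by Lemma~\ref{greedy alg}. The plan is a proof by contradiction: if $|V'| < k$ when the branching stalls, then the bipartite side has absorbed all but fewer than $k$ vertices of the relevant boundary, and the coloring rigidity imposed by the two arcs of $C$ (the paths $S_0$ and $T_0$ carry fixed parity constraints inherited from lying on an odd cycle, which propagates through the absorption steps) forces a separation of $G$ whose cut is smaller than $2k$ while both sides have size at least $q$, contradicting the $(q, 2k)$-unbreakability of $G$.
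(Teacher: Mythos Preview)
Your setup (extracting two long arcs $S_0,T_0$ and the two ``split'' vertices $u,w$ from the induced odd cycle) matches the paper's, and you are right that unbreakability forces every $S_0T_0$-separator disjoint from $S_0\cup T_0$ to be large. The gap is in the growth phase and in the final counting argument.

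First, a small point: it is not true that both $u$ and $w$ land in $V'$. Since $C$ is induced, $G[S_0\cup T_0\cup\{u\}]$ is just a path, hence bipartite, so your rule absorbs $u$ (and symmetrically would absorb $w$ if processed first). Only the second of the two is forced into $V'$.

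More importantly, your single-certificate scheme allows a vertex $v\in N(S)\cap N(T)$ to be absorbed whenever $G[S\cup T\cup\{v\}]$ is bipartite; this happens for \emph{some} relative orientation of the two components, so such $v$ do get absorbed. Once that occurs, $S$ and $T$ merge into one connected bipartite blob, and $V'=N_G(S\cup T)$ no longer separates two sets of size $\ge q$; it only separates $S\cup T$ from whatever is left outside, and that remainder can be arbitrarily small. Consequently your sentence ``$(q,2k)$-unbreakability immediately yields $|V(G)\setminus(S\cup T)|\ge 2k+1$'' is false at termination (it holds only before any merging), and the sketched contradiction via ``coloring rigidity'' does not go through: nothing prevents the absorption from swallowing all but a handful of vertices.

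The paper avoids exactly this by maintaining \emph{two} certificates, $G[S\cup T\cup\{x\}]$ and $G[S\cup T\cup\{y\}]$, and by \emph{never} absorbing a vertex of $N(S)\cap N(T)$ into $S$ or $T$. A parity argument (comparing the unique bipartitions of $G[S\cup T\cup\{x\}]$ and $G[S\cup T\cup\{y\}]$) shows that every such vertex creates an odd cycle in at least one of the two certificates, so it is placed into $Z_x$ or $Z_y$. Because $E(S,T)=\emptyset$ is preserved throughout, $N(S)$ remains an $ST$-separator with $|S|,|T|\ge q$, hence $|N(S)|\ge 2k+1$ by unbreakability; at termination $N(S)\subseteq Z_x\cup Z_y$, so one of $Z_x,Z_y$ has size at least $k$ and Lemma~\ref{greedy alg} finishes. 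This two-certificate trick with the parity claim is the missing idea in your plan.
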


\begin{proof}
Let $C$ be an induced odd cycle of length at least $2q+2$ in $G$. 
Let $x,y \in V(C)$ be arbitrarily chosen vertices such that $C \setminus \{x,y\}$ contains exactly two paths $S$ and $T$ each of length at least $q$ each. 
Moreover since $C$ is an odd cycle one of these two paths is odd and the other is even.
Without loss of generality, we can assume that $S$ is a path of even length and $T$ is a path of odd length. 
Let $Z_x:= \{x\}$ and $Z_y:=\{y\}$.

Below we define a procedure that iteratively grows the sets in $(S,T,Z_x,Z_y)$ while maintaining the following invaraints.

\begin{itemize}
    \item The sets $S,T$ and $Z_x \cup Z_y$ are pairwise disjoint. 
    \item $G[S]$ is connected, bipartite and $|S| \geq q$.
    \item $G[T]$ is connected, bipartite and $|T| \geq q$.
    \item $G[S \cup T \cup \{y\}]$ is a certificate for the oct-minimality for $Z_x$ (and in particular, $G[S \cup T \cup \{y\}]$ is bipartite).
    \item $G[S \cup T \cup \{x\}]$ is a certificate for the oct-minimality for $Z_y$ (and in particular, $G[S \cup T \cup \{x\}]$ is bipartite).
    \item $N(x) \cap S \neq \emptyset$, $N(x) \cap T \neq \emptyset$, $N(y) \cap S \neq \emptyset$ and $N(y) \cap T \neq \emptyset$.
\end{itemize}

Observe that the starting sets $(S,T,Z_x,Z_y)$ defined earlier satisfy these invariant. Note that in invariants 4 and 5, the sets $G[S \cup T \cup \{y\}]$ and $G[S \cup T \cup \{x\}]$, which serve as a certificate of minimality for $Z_{x}$ and $Z_{y}$ respectively, rely on the fact that $C$ is an induced odd cycle.
Since the iterative procedure grows these sets, the last property always hold. Also $G[S \cup T \cup \{x\} \cup \{y\}]$ contains the odd cycle $C$.
The idea is to grow these sets until $Z_x$ or $Z_y$ has size at least $k$. If this happens then we can use Lemma~\ref{greedy alg}, to conclude that $G$ has a minimal oct of size at least $k$. 

Since $G$ is $(q,2k)$-unbreakable and
$|S|,|T| \geq q$, we have that every $ST$-separator in $G$ has size at least $2k+1$. 
In particular, $|N(S) \cup N(T)| \geq 2k+1$.
Thus, if we guarantee that $(N(S) \cup N(T)) \subseteq (Z_x \cup Z_y)$, 
then $|Z_x \cup Z_y| \geq 2k+1$. Hence either $|Z_x| \geq k$ or $|Z_y| \geq k$.

Towards this we grow the sets in $(S,T,Z_x,Z_y)$ as follows. Let us call the vertices in $S \cup T \cup Z_x \cup Z_y$ as marked.

\begin{claim}\label{oddcycle}
   Let $v\in N(S)\cap N(T)$ be an unmarked vertex. Either $G[S \cup T \cup \{y,v\}]$ or $G[S \cup T \cup \{x,v\}]$ contains an odd cycle. 
\end{claim}
\proof 
Since both $G[S]$ and $G[T]$ are connected bipartite graphs, there exists a unique bipartition, say $S = A_S \cup B_S$ and $T = A_T \cup B_T$ of $S$ and $T$ respectively.
Recall that $x$ has neighbours in both the sets $S$ and $T$. 
Since the graph $G[S \cup \{x\}]$ is bipartite, without loss of generality, 
assume that $(N(x) \cap S) \subseteq B_S$.
Since $G[T \cup \{x\}]$ is also bipartite, without loss of generality assume that $N(x) \cap T \subseteq B_T$.
Since the graph $G[S \cup T \cup \{x\}]$ is connected and bipartite, we know that there is an unique bipartition of  $G[S \cup T \cup \{x\}]$ which is 
 $G[S \cup T \cup \{x\}] = \left( (A_S \cup A_T \cup \{x\}) \cup (B_S \cup B_T) \right)$. \\
 
Now, let us focus on the connected graph $G[S \cup T \cup \{y\}]$. 
Because $G[S \cup \{y\}]$ is bipartite, without loss of generality, we can assume that $(N(y) \cap S) \subseteq B_S$. 
We now show that $(N(y) \cap T) \subseteq A_T$.
 For the sake of contradiction, assume that this is not the case. This implies that there are two possibilities. If $y$ has neighbours in both the sets $A_{T}$ and  $B_{T}$ then it leads to a contradiction as $G[T \cup \{y\}]$ is bipartite.
If the neighbours of $y$ are contained in the set $B_{T}$ then it leads to a contradiction as it would imply that  $G[S \cup T \cup \{x, y\}]$ is bipartite. 

 Therefore, we get an unique bipartition $G[S \cup T \cup \{y\}] = \left( (A_S \cup B_T \cup \{y\}) \cup (B_S \cup A_T)\right)$ of $G[S \cup T \cup \{y\}]$.
As the vertex $v$ has neighbours in both the sets $S$ and $T$, there are four possibilities: (i) if $v$ has a neighbour in $A_{S}$ and $A_{T}$ then $G[S \cup T \cup \{y,v\}$ contains an odd cycle, (ii) if $v$ has a neighbour in $A_{S}$ and $B_{T}$ then $G[S \cup T \cup \{x,v\}$ contains an odd cycle, (iii) if $v$ has a neighbour in $A_{T}$ and $B_{S}$ then $G[S \cup T \cup \{x,v\}$ contains an odd cycle, or (iv) if $v$ has a neighbour in $A_{T}$ and $B_{T}$ then $G[S \cup T \cup \{y,v\}$ contains an odd cycle.
This finishes the proof of the claim. \claimqed

Let $v$ be an arbitrarily chosen unmarked vertex.

\begin{description}
    \item[Case 1: $v \in N(S) \cap N(T)$.] 
    From Lemma~\ref{oddcycle}, either $G[S \cup T \cup \{x,v\}]$ or $G[S \cup T \cup \{y,v\}]$ or both contain an odd cycle. 
    If $G[S \cup T \cup \{x,v\}]$ contain an odd cycle, then update $Z_y:= Z_y \cup \{v\}$.
    If $G[S \cup T \cup \{y,v\}]$ contain an odd cycle, then update $Z_x:= Z_x \cup \{v\}$.
    If both are true then update $Z_x:=Z_x \cup \{v\}$ and $Z_y:=Z_y \cup \{v\}$. The sets $S,T$ remain the same. 
    Observe that in either case, the updated sets satisfy all the invariants.
    \item[Case 2: $v \in N(S) \setminus N(T)$.]
    In this case, 
    if $G[S \cup \{v\}]$ is bipartite,
    then update $S:= S \cup \{v\}$. The other sets $T,Z_x,Z_y$ remain the same.
    Note that the updated sets (in particular $S$) maintains the invariant.
    Otherwise $G[S \cup \{v\}]$ 
    contain an odd cycle. 
    In this case update $Z_x:= Z_x \cup \{v\}$ and  $Z_y:= Z_y \cup \{v\}$.
    \item[Case 3: $v \in N(T) \setminus N(S)$.] This case is symmetric to Case 2.
\end{description}

We repeat the above process until we have sets $(S, T, Z_x,Z_y)$ such that all vertices in $N(S) \cup N(T)$ are marked.
In particular, in this case $(N(S) \cup N(T)) \subseteq (Z_x \cup Z_y)$. As argued earlier, this implies either $|Z_x| \geq k$ or $|Z_y| \geq k$. In both cases, we report that $G$ has a minimal oct of size at least $k$ from Lemma~\ref{greedy alg}.
\end{proof}

\vspace{9pt}

\noindent{\bf Sufficient condition 2 [Large family of short induced odd cycles].}
\begin{lemma}\label{lot of small odd cycles}
     Let $(G,k)$ be an instance of \textsc{Maximum Minimal OCT}. Let $d$ be any positive integer. 
     If $\mathcal{F}$ is a family containing distinct induced odd cycles of $G$ of length at most $d$ and $|\mathcal{F}|> d(d!) (k-1)^{d}$
     then $G$ has a minimal oct of size at least $k$.
\end{lemma}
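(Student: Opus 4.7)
The plan is to combine the Sunflower Lemma (Theorem~\ref{thm:sunflower}) with a greedy extension argument in the spirit of Lemma~\ref{greedy alg}. The case $k=1$ is handled trivially: the hypothesis $|\mathcal{F}|>0$ forces $G$ to contain an odd cycle and hence to be non-bipartite, so every minimal oct has size at least $1=k$.

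Assume $k\geq 2$. I would first pigeonhole on cycle length. Since every cycle in $\mathcal{F}$ has length in $\{3,\dots,d\}$, some length $\ell\leq d$ yields a subfamily $\mathcal{F}_\ell \subseteq \mathcal{F}$ of distinct induced odd $\ell$-cycles with
\[
|\mathcal{F}_\ell| \;>\; d!\,(k-1)^d \;\geq\; \ell!\,(k-1)^\ell.
\]
Viewing each cycle as its vertex set (of size exactly $\ell$), Theorem~\ref{thm:sunflower} extracts a sunflower of $k$ petals: distinct induced odd cycles $C_1,\dots,C_k$ on vertex sets $V_1,\dots,V_k$ sharing a common core $Y = V_i \cap V_j$ (for $i\neq j$), with pairwise disjoint nonempty petals $P_i := V_i \setminus Y$.

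The key structural step is the observation that $G[Y]$ is bipartite. Since the petals are nonempty, $Y \subsetneq V_i$, and as $G[V_i]$ is an induced chordless cycle, $G[Y]$ is an induced subgraph of this cycle with at least one vertex deleted, hence a disjoint union of paths. I would then greedily extend $Y$ to a maximal vertex set $W \supseteq Y$ with $G[W]$ bipartite and set $Z := V(G)\setminus W$. Maximality of $W$ guarantees that $G[W\cup\{v\}]$ contains an odd cycle for every $v\in Z$, while $G[W]$ itself is bipartite, so $Z$ is a minimal oct of $G$.

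It remains to lower bound $|Z|$. For each $i\in[k]$, $G[V_i]$ contains an odd cycle while $G[W]$ is bipartite, which forces $V_i \not\subseteq W$ and hence $P_i \cap Z \neq \emptyset$. Since the petals $P_1,\dots,P_k$ are pairwise disjoint, this produces at least $k$ distinct vertices in $Z$, so $|Z| \geq k$. The only subtle point I anticipate is the temptation to invoke Lemma~\ref{greedy alg} directly with a single chosen representative $v_i \in P_i$ from each petal; this naive approach fails because cross-petal edges of $G$ may render $G[Y \cup \bigcup_{i} (P_i \setminus \{v_i\})]$ non-bipartite, and that is precisely why the greedy extension must start from the core $Y$ alone rather than from a preselected set of petal representatives.
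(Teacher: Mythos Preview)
Your proof is correct and follows essentially the same approach as the paper: apply the Sunflower Lemma, observe that the core $Y$ induces a bipartite graph (being a proper induced subgraph of a chordless cycle), and run the greedy extension from $G[Y]$ so that the resulting minimal oct must hit every petal. Your version is in fact slightly more careful than the paper's in one respect: you explicitly pigeonhole on the cycle length before invoking Theorem~\ref{thm:sunflower}, which as stated requires all sets to have the \emph{same} cardinality---a step the paper's proof leaves implicit.
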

\begin{proof} From the Sunflower Lemma (Theorem~\ref{thm:sunflower}), we can conclude that there exist at least $k$ induced odd cycles $\{F_1, F_2, \dots, F_k\} \subseteq \mathcal{F}$ such that $V(F_i) \cap V(F_j) = Y$ for all $i, j \in [k]$. As the cycles in $\mathcal{F}$ are induced odd cycles, the graph induced by the set of vertices in $Y$ must be bipartite. Note that $Y$ could possibly be empty.
We claim that $G$ has a minimal odd cycle transversal of size at least $k$. Such a minimal odd cycle transversal can be obtained by the greedy algorithm described in the proof of Lemma \ref{greedy alg}. We start with the induced subgraph $G[Y]$. Clearly, any minimal odd cycle transversal obtained by this algorithm will contain at least one vertex from $V(F_i) \setminus Y$ for each $i \in [k]$. 
Since the sets $V(F_i) \setminus Y$ for each $i \in [k]$ are disjoint, a minimal odd cycle transversal must have a size of at least $k$. \end{proof}

\vspace{9pt}

\noindent{\bf The combination lemma.} 

\begin{lemma}\label{lem:sumup}
    Given a graph $G$, a vertex $x \in V(G)$ and positive integers $d,k$,
    there is an algorithm that runs in $(kd)^{\mathcal{O}(d)} \cdot n^{\mathcal{O}(1)}$ time,
    and correctly outputs one of the following:

    \begin{enumerate}
        \item an induced odd cycle containing $x$, 
        \item an induced odd cycle of length at least $d$,
        \item a family $\mathcal{F}$ of distinct induced odd cycles, each of length at most $d-1$,  
        such that $|\mathcal{F}| \geq d\cdot d! \cdot(k-1)^{d}$,
        \item a determination that there is no induced odd cycle containing $x$ in $G$.
    \end{enumerate}
\end{lemma}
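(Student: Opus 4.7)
The plan is to design an enumeration-based algorithm that uses the four output cases as natural stopping conditions. We initialize $\mathcal{F} \leftarrow \emptyset$ and then enumerate distinct induced odd cycles of $G$ one at a time, with polynomial delay per cycle. Whenever a new induced odd cycle $C$ is discovered, we inspect $|V(C)|$: if $|V(C)| \geq d$, halt and output $C$ for case~2; otherwise add $C$ to $\mathcal{F}$, and halt with case~3 as soon as $|\mathcal{F}|$ reaches $d \cdot d! \cdot (k-1)^{d}$. If the enumeration terminates without firing case~2 or case~3, we check $\mathcal{F}$: if some $C \in \mathcal{F}$ contains $x$, output $C$ for case~1; otherwise output case~4.

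For the running time, the algorithm halts after producing at most $d \cdot d! \cdot (k-1)^d + 1 = (kd)^{\mathcal{O}(d)}$ cycles (via the case~2 or case~3 early-stop), or else the enumeration terminates naturally, in which case the total number of induced odd cycles of $G$ is at most that threshold. Combined with a polynomial-delay implementation of the enumeration subroutine, the overall running time is $(kd)^{\mathcal{O}(d)} \cdot n^{\mathcal{O}(1)}$. The enumeration itself is realized by a DFS-style branching that grows a partial induced path from a chosen start vertex, extending by neighbours of the endpoint that introduce no chord with the current path, and closing into a cycle when the endpoint becomes adjacent to the start; a canonical form on the cycle (e.g., fixing the lexicographically smallest vertex as the start together with a fixed orientation) ensures each cycle is output exactly once.

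The correctness of cases~1, 2, and~3 is clear from the construction. For case~4: if the algorithm outputs case~4, the enumeration terminated naturally, implying that we have enumerated \emph{every} induced odd cycle of $G$ (no threshold was hit, no long cycle was encountered). Since no such cycle has length $\geq d$ (case~2 would otherwise have fired) and none contains $x$ (no $x$-cycle was added to $\mathcal{F}$), no induced odd cycle contains $x$, justifying case~4. The main technical obstacle is realising the polynomial-delay enumeration of induced odd cycles with the canonicalisation and chord-checking required to guarantee distinctness and the induced property; a careful DFS-style branching, together with pruning based on the canonical form, should suffice here because we only need to absorb the chord-check overhead into the per-cycle polynomial factor, not to speed up enumeration of any particular class beyond $n^{\mathcal{O}(1)}$ delay.
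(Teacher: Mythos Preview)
The proposal hinges on enumerating induced \emph{odd} cycles with polynomial delay, and this step is not justified. The DFS you describe---grow an induced path, extend by chord-free neighbours, close when the endpoint sees the start---does not have polynomial delay: the search tree can contain exponentially many induced-path prefixes that never close into an odd cycle (think of a single triangle with a large tree hanging off one vertex; every rooted induced path into the tree is explored but none yields a cycle). Canonicalisation and chord-checking control duplicates, not dead ends. Even if you appeal to a known polynomial-delay algorithm for listing \emph{all} chordless cycles, the parity restriction breaks the argument: a graph may have exponentially many short even chordless cycles but only a handful of odd ones, so filtering the output of an all-chordless-cycles enumerator does not give polynomial delay for odd cycles. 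Consequently neither the early-stop bound nor the ``enumeration terminates naturally'' bound on the number of enumerated objects translates into the claimed $(kd)^{\mathcal{O}(d)} \cdot n^{\mathcal{O}(1)}$ running time.

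The paper sidesteps exactly this obstacle. It never enumerates cycles globally; instead it maintains an induced subgraph $G'$ that is guaranteed to still contain the hypothetical induced odd cycle $C_x$ through $x$, finds \emph{one} induced odd cycle $F$ in $G'$ in polynomial time, and if $F$ is short and avoids $x$ it \emph{branches} over the at most $2^{|F|}\le 2^{d-1}$ possible intersections $F\cap V(C_x)$ and deletes the remainder of $F$. Each branch strictly shrinks $G'$ while preserving $C_x$, so after at most $d\cdot d!\cdot(k-1)^d$ iterations per branch one of cases 1--4 fires. The branching factor $2^{d-1}$ and depth $d\cdot d!\cdot(k-1)^d$ give the $(kd)^{\mathcal{O}(d)}$ bound without any enumeration subroutine. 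The missing idea in your approach is precisely this: destroy each found short cycle (up to the guessed overlap with $C_x$) so that the \emph{next} call to ``find some induced odd cycle'' is forced to return a genuinely new one, rather than trying to enumerate distinct cycles directly.
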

\begin{proof}
Suppose $G$ contains an induced odd cycle containing $x$. Let $C_x$ be one such cycle.
We design an iterative algorithm that maintains a pair $(G',\mathcal{F})$, where $G'$ is an induced subgraph of $G$ and $\mathcal{F}$ is a family of induced odd cycles of length at most $d-1$ in $G$, with the following additional properties.
The graph $G'$ is guaranteed to contain the cycle $C_x$, if $C_x$ existed in the first place in $G$, and
every induced odd cycle in $G'$ is distinct from any cycle in the family $\mathcal{F}$.

Initialize $G' := G$ and $\mathcal{F} = \emptyset$.
In each iteration, 
the algorithm finds an arbitrary induced odd cycle of $G$, say $F$, in polynomial time, if it exists. 
The following cases can now arise.
\begin{enumerate}
    \item The algorithm fails to find the cycle $F$. 
    That is $G'$ has no induced odd cycle. In this case, report that $G$ has no induced odd cycle passing through $x$.
    
    This is correct because if $G$ had such an induced odd cycle passing through $x$, then $C_x$ exists in $G$ and by the invariants of the algorithm $C_x$ also exists in $G'$, which is a candidate for the cycle $F$.
    
    \item If $F$ contains $x$, then return $F$ as the induced odd cycle containing $x$.
    
    \item If the length of $F$ is at least $d$, then return $F$ as an induced odd cycle of length at least $d$ in $G$.

    \item Otherwise, $F$ exists but does not contain $x$ and has a length of at most $d-1$. 

    By the invariants of the pair $(G',\mathcal{F})$, $F$ is distinct from all the cycles in $\mathcal{F}$.
    Update $\mathcal{F}$ by adding $F$ to it. Then the updated $\mathcal{F}$ satisfies the required invariants.
    
    To update $G'$ proceed as follows.
    Guess the intersection of $V(F)$ with $V(C_x)$ (in $G'$). 
    Let this be $F'$. 
    The number of guesses is bounded by $2^{|F|} \leq 2^{d-1}$. 
    Since $F$ does not contain $x$, $F$ is not equal to $C_x$. Hence, $F \setminus F'$ is non-empty.
    Update $G':= G' - (F\setminus F')$. 
    Observe that the updated $G'$ contains $C_x$ if the old $G_x$ contained it.
    Also every induced odd cycle in the updated $G'$ is distinct from $F$ (that has been newly added to $\mathcal{F}$) because a non-empty subset of $V(F)$ (that is $V(F) \setminus V(F')$) has been deleted in the updated $G'$.
\end{enumerate}

If the first, second or third condition mentioned above does not apply in each of the first $d \cdot d! \cdot (k-1)^{d}$ iterations, then at the end of the $i$-th iteration where $i = d \cdot d! \cdot (k-1)^{d}$, $|\mathcal{F}| = d \cdot d! \cdot (k-1)^{d}$. In this case, we return $\mathcal{F}$ as a large family of induced odd cycles of length at most $d-1$ of $G$.

This finishes the description and correctness of the algorithm. For the running time observe that in each iteration, the algorithm runs in polynomial time to find $F$ and check the first three conditions. The fourth condition in each iteration makes $2^{d-1}$ guesses and the number of iterations is at most $d \cdot d! \cdot (k-1)^d$. Therefore the overall running time is $(kd)^{\mathcal{O}(d)} \cdot n^{\mathcal{O}(1)}$.
\end{proof}

\begin{proof}[Proof of Theorem \ref{main vertex OCT}]
The algorithm for Theorem~\ref{main vertex OCT} proceeds as follows. Recall that $G$ is a $(q,k)$-unbreakable graph.
For each $x \in V(G)$,
run the algorithm of Lemma~\ref{lem:sumup} on input $(G,x,2q + 2,k)$.

 If for some $x\in V(G)$, the algorithm returns an induced odd cycle of length at least $2q+2$ or a family $\mathcal{F}$ of distinct induced odd cycles of length at most $2  q+1$ such that $|\mathcal{F}| \geq (2q+2)(2q+2)! (k-1)^{2q+2}$ 
 then we return a yes instance due to Lemma~\ref{long induced odd cycle} or~\ref{lot of small odd cycles}, respectively. 
 If the algorithm returns that there is no induced odd cycle containing $x$ in $G$ then we can delete $x$ from $G$ and solve the problem on the reduced graph. 
 This is safe because of Observation \ref{delete unnecessary vertex}. 
 Finally, if the algorithm returns an induced odd cycle containing $x$ of length at least $2 q +2$, then again report that it is a yes instance because of Lemma~\ref{long induced odd cycle}.

 If none of the above conditions hold, then for each $x \in V(G)$, the algorithm of Lemma~\ref{lem:sumup} returns an induced odd cycle containing $x$, say $C_x$, of length at most $2 q +1$. Note that the cycles returned for each $x \in V(G)$ in this case may not be distinct.

The following claim shows that if the number of vertices in $G$ is large (and there is an induced odd cycle of small length for each vertex of the graph), 
then there exists a large family containing {\em distinct} induced odd cycles of small length in $G$, in which case we can report a yes instance by Lemma~\ref{lot of small odd cycles}. 

\begin{claim}
    If the number of vertices in $G$ is at least $ (2q+2)^{2}(2q+2)! (k-1)^{2q+2} +1$, then $G$ contains a minimal oct of size at least $k$.
\end{claim}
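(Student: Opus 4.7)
The plan is to apply a pigeonhole argument to the family of small induced odd cycles $\{C_x : x \in V(G)\}$ obtained from the previous step, extract a sufficiently large sub-family of \emph{distinct} induced odd cycles, and then invoke Lemma~\ref{lot of small odd cycles} to conclude the existence of a minimal oct of size at least $k$.

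First I would note that by the hypothesis entering this claim, for every $x \in V(G)$ the algorithm of Lemma~\ref{lem:sumup} has returned an induced odd cycle $C_x$ through $x$ of length at most $2q+1$ (since otherwise one of the earlier yes-reporting cases would have triggered). Since $x \in V(C_x)$ for every $x$, the map $x \mapsto C_x$ has the property that for any fixed cycle $C$ the preimage has size at most $|V(C)| \leq 2q+1$. Consequently, the number of \emph{distinct} cycles appearing in the multiset $\{C_x : x \in V(G)\}$ is at least $|V(G)|/(2q+1)$.

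Now I would plug in the hypothesis $|V(G)| \geq (2q+2)^{2}(2q+2)!(k-1)^{2q+2}+1$ and use the elementary inequality $(2q+1)(2q+2) < (2q+2)^{2}$ to obtain
\[
\frac{|V(G)|}{2q+1} \;>\; \frac{(2q+2)^{2}(2q+2)!(k-1)^{2q+2}}{2q+1} \;>\; (2q+2)(2q+2)!(k-1)^{2q+2}.
\]
Thus I can exhibit a family $\mathcal{F}$ of strictly more than $(2q+2)(2q+2)!(k-1)^{2q+2}$ distinct induced odd cycles of $G$, each of length at most $2q+1 \leq 2q+2$. Applying Lemma~\ref{lot of small odd cycles} with $d := 2q+2$ then yields a minimal oct of $G$ of size at least $k$, finishing the proof of the claim (and, in combination with the cases already handled, of Theorem~\ref{main vertex OCT}).

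There is no serious obstacle here: the whole content of the claim is the pigeonhole observation that each small induced odd cycle is ``charged'' by at most $2q+1$ vertices, together with a routine arithmetic check that the numerical bound $(2q+2)^{2}(2q+2)!(k-1)^{2q+2}+1$ is chosen precisely so that after dividing by $2q+1$ one lands strictly above the threshold required by Lemma~\ref{lot of small odd cycles}. The only care needed is to make sure the strict inequality in the Sunflower-based lemma is preserved, which is why the bound in the hypothesis is quadratic in $(2q+2)$ rather than merely linear.
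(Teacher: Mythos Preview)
Your proposal is correct and follows essentially the same approach as the paper: both use the pigeonhole observation that each cycle $C_x$ contains at most $2q+1$ vertices and hence can be repeated at most $2q+1$ times, yielding enough distinct short induced odd cycles to invoke Lemma~\ref{lot of small odd cycles} with $d=2q+2$. Your arithmetic is in fact slightly more explicit than the paper's own one-line count.
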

\proof We will construct a $\mathcal{F}$ of distinct induced odd cycles in $G$ of length at most $2q+1$. For every vertex $x\in V(G)$, 
we take a cycle $C_{x}$. 
Unless the cycle $C_{x}$ is already in $\mathcal{F}$, we will update $\mathcal{F}=\mathcal{F} \cup \{C_{x}\}$. As the length of cycles in $C_{x}$ returned by the above algorithm is bounded by $2q+1$, if the number of vertices in $G$ is more than $(2q+1)(2q+2)(2q+2)! (k-1)^{2q+2} $ then $|\mathcal{F}|\geq (2q+2)(2q+2)! (k-1)^{2q+2}$. Due to Lemma \ref{lot of small odd cycles}, we return a yes instance.  
\claimqed

This finishes the proof of Theorem \ref{main vertex OCT}. 

\end{proof}

\section{Conclusion}\label{sec:conclusion}
In this paper, we established that both {\sc Maximum Minimal $st$-Separator} and {\sc Maximum Minimal Odd Cycle Transversal (OCT)} are fixed-parameter tractable parameterized by the solution size.
Instead of using treewidth-based win-win approaches, we design FPT algorithms for highly unbreakable graphs for both these problems.
While we have demonstrated the FPT nature of these problems, the challenge of developing efficient FPT algorithms remains open. 
Additionally, the edge-deletion version of the {\sc Maximum Minimal OCT} can be shown to be FPT using similar techniques, but with much simpler ideas. But designing an faster/explicit FPT algorithm even for this version remains an interesting direction for future research. 
Finally, the parameterized complexity of the weighted version of  {\sc Maximum  Minimal $st$-Separator} and {\sc Maximum Minimal Weight OCT} remain open as very large weights cause problems while formulating the problem in CMSO, in order to reduce it to the unbreakable case.

\bibliographystyle{plainurl}
\bibliography{bibliography}

\end{document}